\newtheorem{theorem}{Theorem}[section]
\newtheorem{remark}{Remark}
\newtheorem{lemma}[theorem]{Lemma}
\newtheorem{claim}{Claim}[theorem]
\newtheorem{proposition}{Proposition}
\newtheorem{observation}[theorem]{Observation}
\theoremstyle{definition} 
\newtheorem{definition}[theorem]{Definition}
\newcommand{\ignore}[1]{}
\newcommand{\newfontobj}[2]{
  \newcommand{#1}[1]{
    \expandafter\def\csname##1\endcsname{{#2 ##1}}}}
\newfontobj{\class}{\rm}
\newfontobj{\lang}{\bf}
\newcommand{\lexsat}{\ensuremath{\mathsf{LexSAT}}}
\newcommand{\optrust}{\ensuremath{\mathsf{optConf}}}
\newcommand{\maxsat}{\ensuremath{\mathsf{MaxSat}}}
\newcommand{\maxsatval}{\ensuremath{\mathsf{MaxSatVal}}}
\newcommand{\trust}{\ensuremath{\mathsf{Conf}}}
\newcommand{\optrustval}{\ensuremath{\mathsf{optConfVal}}}
\newcommand{\T}{\mbox{\rm T}}
\newcommand{\F}{\mbox{\rm F}}
\newcommand{\optsem}{\ensuremath{\mathsf{optSem}}}
\newcommand{\optsemval}{\ensuremath{\mathsf{optSemVal}}}
\newcommand{\optaccess}{\ensuremath{\mathsf{optAccess}}}
\newcommand{\optaccessval}{\ensuremath{\mathsf{optAccessVal}}}
\newcommand{\negation}{\mbox{$\daleth$}}
\newcommand{\sem}{\ensuremath{\mathsf{Sem}}}
\DeclareMathOperator*{\E}{\mathbb{E}}
\title{Constraint Optimization over Semirings\thanks{The authors decided to forgo the old convention
of alphabetical ordering of authors in favor of a randomized ordering, denoted by \textcircled{r}. The publicly
verifiable record of the randomization is available at \href{https://www.aeaweb.org/journals/policies/
random-author-order/search}{https://www.aeaweb.org/journals/policies/
random-author-order/search}. An abridged version of the paper appeared in AAAI 2023.}
}
\newcommand\email[2][]%
  {\newaffiltrue\let\AB@blk@and\AB@pand
      \if\relax#1\relax\def\AB@note{\AB@thenote}\else\def\AB@note{\relax}%
        \setcounter{Maxaffil}{0}\fi
      \begingroup
        \let\protect\@unexpandable@protect
        \def\thanks{\protect\thanks}\def\footnote{\protect\footnote}%
        \@temptokena=\expandafter{\AB@authors}%
        {\def\\{\protect\\\protect\Affilfont}\xdef\AB@temp{\url{#2}}}%
         \xdef\AB@authors{\the\@temptokena\AB@las\AB@au@str
         \protect\\[\affilsep]\protect\Affilfont\AB@temp}%
         \gdef\AB@las{}\gdef\AB@au@str{}%
        {\def\\{, \ignorespaces}\xdef\AB@temp{\url{#2}}}%
        \@temptokena=\expandafter{\AB@affillist}%
        \xdef\AB@affillist{\the\@temptokena \AB@affilsep
          \AB@affilnote{}\protect\Affilfont\AB@temp}%
      \endgroup
      \let\AB@affilsep\AB@affilsepx
}
\renewcommand\Affilfont{\itshape\small}
\author{A. Pavan \textcircled{r}}
\affil{Iowa State Univerity}
\email{pavan@cs.iastate.edu}
\author{Kuldeep S.~Meel \textcircled{r}}
\affil{National University of Singapore, Singapore}
\email{meel@comp.nus.edu.sg}
\author{ N. V. Vinodchandran \textcircled{r}}
\affil{University of Nebraska-Lincoln}
\email{vinod@cse.unl.edu}
\author{Arnab Bhattacharyya}
\affil{National University of Singapore, Singapore}
\email{arnabb@nus.edu.sg}
\begin{document}
\maketitle

\begin{abstract}
    Interpretations of logical formulas over semirings (other than the Boolean semiring) have applications in various areas of computer science including logic, AI, databases, and security.  Such interpretations  provide richer information beyond the truth or falsity of a statement. Examples of such semirings include Viterbi semiring, min-max or access control semiring, tropical semiring, and fuzzy semiring. 
    
    The present work investigates the complexity of constraint optimization problems over semirings. The generic optimization problem we study is the following: Given a propositional formula $\varphi$ over $n$ variable and a semiring $(K,+,\cdot,0,1)$, find the maximum value over all possible interpretations of $\varphi$ over $K$. This can be seen as a generalization of the well-known satisfiability problem (a propositional formula is satisfiable if and only if the maximum value over all interpretations/assignments over the Boolean semiring is 1).  A related problem is to find an interpretation that achieves the maximum value. In this work, we first focus  on these optimization problems over the Viterbi semiring, which we call \optrustval\ and \optrust.   
    
    We first show that for general propositional formulas in negation normal form, \optrustval\ and {\optrust} are in ${\mathrm{FP}}^{\mathrm{NP}}$. We then investigate {\optrust} when the input formula $\varphi$ is represented in the conjunctive normal form.  For CNF formulae, we first derive an upper bound on the value of {\optrust} as a function of the number of maximum satisfiable clauses. In particular, we show that if $r$ is the maximum number of satisfiable clauses in a CNF formula with $m$ clauses, then its $\optrust$ value is at most $1/4^{m-r}$. Building on  this we establish that {\optrust} for CNF formulae is hard for the complexity class ${\mathrm{FP}}^{\mathrm{NP}[\log]}$. We also design polynomial-time approximation algorithms and establish an inapproximability for {\optrustval}. We establish similar complexity results for these optimization problems over other semirings including tropical, fuzzy, and access control semirings.
\end{abstract}

\section{Introduction}\label{sec:intro}

Classically, propositional formulae are interpreted over the Boolean semiring $\mathbb{B} = (\{\F,\T\},\vee,\wedge,{\F},{\T})$ which is the standard semantics for the logical truth. In this setting, the variables take one of the two values  \T\ (true) or \F\ (false).  However, it is natural to extend the semantics to other semirings. Here, the idea is to interpret logical formulae when the variables take values over a semiring $\mathbb{K}=(K, +, \cdot, 0, 1)$. Such interpretations  provide richer information beyond the truth or falsity of a statement and have applications in several areas such as databases, AI, logic, and security~(see \cite{IL89, FR97, Z97, CWW00, Cui02, GT17} and references therein). In particular, semiring {\em provenance analysis} has been successfully applied in several software systems, such as Orchestra and Propolis (see, e.g., \cite{ADT11, DMRT14, FGT08, G11, Tan13}).  

Examples of semirings that are studied in the literature include Viterbi semiring, fuzzy semiring, min-max or access control semiring, and tropical semiring. Semantics over the Viterbi semiring $\mathbb{V} = ([0,1],\max,\cdot,0,1)$  has 
applications in database provenance, where $x \in [0, 1]$ is interpreted as a {\em confidence score}~\cite{GT17, GKT07, Tan17, GradelM21}, in probabilistic parsing, in probabilistic CSPs, and in Hidden Markov Models~\cite{V67,KleinM03,BMR95}. The access control semiring can be used as a tool in security specifications~\cite{GT17}. Other semirings of interest include the {tropical semiring}, used in cost analysis and algebraic formulation for shortest path algorithms~\cite{Mohri02}, and fuzzy semirings used in the context of fuzzy CSPs~\cite{BMR95}.

Optimization problems over Boolean interpretations have been central in many application as well as foundation areas.
 Indeed, the classical satisfiability problem is determining whether a formula $\phi(x_1, \cdots, x_n)$ has an interpretation/assignment over the Boolean semiring that evaluates to True. Even though semiring semantics naturally appear in a variety of applications, the optimization problems over semirings, other than the Boolean semiring, have not received much attention.  

In this work, we introduce and investigate the complexity of  optimization problems over semiring semantics. Let $\mathbb{K}=({K}, +, \cdot, 0, 1)$ be a semiring with a total order over $K$ and $\varphi$ be a propositional formula over a set $X$ of variables.  A $\mathbb{K}$-interpretation  $\pi$ is a function from $X$ to $K$. Such an interpretation can be naturally extended to formula $\varphi$, which we denote by $\sem(\varphi, \pi)$.  We study the following computational problem:  Given  a propositional formula $\varphi$ in negation normal form over a set $X$ of variables,
compute the maximum value of $\sem(\varphi, \pi)$ over all possible interpretations $\pi$. We call this problem \optsemval. A related problem, denoted \optsem, is to compute an interpretation $\pi$ that maximizes $\sem(\varphi, \pi)$. Refer to Section~\ref{sec:prelims} for a precise formulation of these problems.

There has been a rich history of work which   formulated the notion of CSP over semirings and investigated local consistency algorithms in the general framework~\cite{B04,BG06,BMR95,BistarelliMR97,BMRS+99,MRS06}. These works did not involve interpretations and did not focus on the computational complexity of the above-defined problems. Relatedly, the computational complexity of  sum-of-product problems over semirings has been studied recently~\cite{EiterK21}. However, the  problems  they study are different from ours. To the best of our knowledge, optimization problems 
\optsem\ and \optsemval\ that we consider over semirings have not been studied earlier and there are no characterizations of their computational complexity. 
\subsection{Our Results}
We comprehensively study the computational complexity of $\optsem$ and the related problem \optsemval\  over various semirings such as Viterbi semiring, tropical semiring, access control semiring and fuzzy semiring,  from both an algorithmic and a complexity-theoretic viewpoint. When the underlying semiring is the Viterbi semiring, we call these problems ${\optrust} $ and ${\optrustval}$.
Our results can be summarized as follows:
\begin{enumerate}[leftmargin=15pt]
    \item We establish that both  \optrust\ and \optrustval\ are  in  the complexity class $\mathrm{FP}^{\mathrm{NP}}$. The crucial underlying observation is that even though $\pi$ maps $X$ to real values in the range $[0,1]$; the solution to {\optrustval} can be represented using polynomially many bits. We then draw upon connections to Farey sequences to derive an algorithm with polynomially many $\mathrm{NP}$ calls (Theorem~\ref{theorem:pnp}).
    
    \item For CNF formulas, we establish an upper bound on  {\optrustval} as a function of the number of maximum satisfiable clauses (Theorem~\ref{thm:OptTrustBound}).
    
    \item We also establish a lower bound on the complexity of {\optrustval} and {\optrust}. In particular, we show that both the problems are hard for the complexity class $\mathrm{FP}^{\mathrm{NP}[\log]}$. To this end, we demonstrate a reduction from MaxSATVal to {\optrustval}; this reduction crucially relies on the above-mentioned upper bound on {\optrustval} in terms of the number of maximum satisfiable clauses (Theorem~\ref{thm:hardness}).
    
    \item We design a polynomial-time approximation algorithm for $\optrustval$ and establish an inapproximability result. In particular, for 3-CNF formulas with $m$ clauses,  we design a $0.716^m$-approximation algorithm and show that the approximation factor can not be improved to $0.845^m$ unless P = NP (Theorems~\ref{thm:appalgo} and~\ref{thm:inapprox}).
    \item Finally, we show that for the access control semiring, the complexity of these optimization problems is equivalent to the corresponding problems over Boolean semiring (Theorem~\ref{thm:accessequivalence}).
\end{enumerate}

\begin{remark}
Since Viterbi semiring and tropical semiring are isomorphic via the mapping $x \leftrightarrow -\ln x$, results established for Viterbi semiring also hold for the tropical semiring. Fuzzy semiring can be seen as an ``infinite refinement'' of access control semiring with the same algebraic structure, results that we establish for access control semiring  also hold for fuzzy semiring. 
\end{remark}
\noindent{\em Organization.} The rest of the paper is organized as follows. We give the necessary notation and definitions in Section~\ref{sec:prelims}. Section~\ref{sec:comptrust} details our results on the computational complexity of \optrust\ and \optrustval. Section~\ref{sec:approxtrust} deals with approximate algorithms and the hardness of approximation  of {\optrustval}. In Section~\ref{sec:access}, we give complexity results for optimization problems for the access control semiring.
Finally, we conclude in Section~\ref{sec:conclusion}.

\section{Preliminaries}\label{sec:prelims}

We assume that the reader is familiar with  definition of a semiring. We denote a generic semiring by $\mathbb{K}=(K,+,\cdot,0,1)$ where $K$ is the underlying set. For interpreting formulas over $\mathbb{K}$, we will add a ``negation'' function $\negation:K\rightarrow K$. 
We assume $\negation$ is a bijection so that  $\negation(\negation(x))=x$, and $\negation(0) = 1$. For ease of presentation, we use the most natural negation function (depending on the semiring). However, many of our results hold for very general interpretations of negation. Finally, as our focus is on optimization problems, we will also assume a (natural) total order on the elements of $K$.      

For a set $X = \{x_1, x_2, \ldots x_n\}$ of variables, we associate the set $\overline{X} = \{\neg x_1,\ldots, \neg x_n \}$. We call $X\cup \overline{X}$ the literals and formulas we consider are propositional formulas over $X\cup \overline{X}$ in {\em negation normal form}. We also view a propositional formula $\varphi$ in negation normal form as a rooted directed tree  wherein each leaf node is labeled with a literal, 1, or 0  and each internal node is labeled with conjunction $(\wedge)$ or disjunction $\vee$. Note that viewing $\varphi$ as a tree ensures  a similar size as its string representation.   We call the tree representing the formula $\varphi$ as {\em formula tree} and denote it with $T_{\varphi}$. 
For a propositional formula $\varphi(x_1, \cdots, x_n)$, in negation normal form we use $m$ to denote the size of the formula, i.e. the total number of occurrences of each variable and its negation. When $\varphi(x_1, \cdots x_n)$ is in CNF form, $m$ denotes the number of clauses. 
We interpret a propositional formula over a semiring $\mathbb{K}$ by mapping the variables to $K$ and naturally extending it. Formally, a $\mathbb{K}$-interpretation is a function $\pi: X \rightarrow K$. 
 We extend $\pi$ to an arbitrary propositional formula $\varphi$ in negation normal form, which is denoted by $\sem(\varphi,\pi)$ ($\sem$ stands for `semantics'), as follows. 
\begin{enumerate}
 \item[-] $\sem(x, \pi) = \pi(x)$ \label{rule:val-end}
    \item[-] $\sem(\neg x, \pi) = \negation(\pi(x))$ \label{rule:compl}
    \item[-] $\sem(\alpha \vee \beta,\pi) = \sem(\alpha,\pi) + \sem(\beta,\pi)$ \label{rule:val-begin} 
    \item[-] $\sem(\alpha \wedge \beta,\pi) = \sem(\alpha,\pi) \cdot \sem(\beta,\pi)$ 
  
\end{enumerate}

\subsection{Optimization Problems and Complexity Classes}
 For a formula $\varphi$, we define $\optsemval(\varphi)$ as  
\[ \optsemval(\varphi) = \max_{\pi} \{\sem(\varphi, \pi)\},\]
where $\max$ is taken over all possible $\mathbb{K}$-interpretations from $X$ to $K$.

\begin{definition}[\optsem\ and \optsemval] Given a propositional formula $\varphi$ in negation normal form, the \optsemval\ problem is to compute $\optsemval(\varphi)$. The \optsem\ problem is to compute a
$\mathbb{K}$-interpretation that achieves  $\optsemval(\varphi)$, i.e, output $\pi^*$ so that $\optsemval(\varphi) = \sem(\varphi,\pi^*)$.
\end{definition}

Notice that when $\mathbb{K}$ is the Boolean semiring (with $0 < 1$ ordering and standard negation interpretation), $\optsemval$ is the well-known satisfiability problem: the formula $\varphi$ is satisfiable if and only if $\optsemval(\varphi)=1$. Also, the problem $\optsem$ is to  output a satisfying assignment if the formula $\varphi$ is satisfiable. 

In this work, we consider the following semirings.

\begin{enumerate}

\item Viterbi semiring $\mathbb{V} = ([0,1],\max,\cdot,0,1)$. As mentioned, the Viterbi semiring has applications in database provenance, where $x \in [0, 1]$ is interpreted as confidence scores, in probabilistic parsing, in probabilistic CSPs, and in Hidden Markov Models.

\item The tropical semiring $\mathbb{T} = (\mathbb{R}\cup\{\infty\}, \min,+, \infty, 0)$.  

The tropical semiring is isomorphic to the Viterbi semiring via the mapping $x \leftrightarrow -\ln x$. 

\item The fuzzy semiring $\mathbb{F} = ([0, 1], \max, \min, 0, 1)$.
\item Access control semiring $\mathbb{A}_k = ([k], \max, \min, 0, k)$. Intuitively, each $i\in [k]$ is associated with an access control level with natural ordering. Here 0 corresponds to public access and $n$ corresponds to no access at all. $[k]$ is the set $\{0 < 1 < \cdots < k\}$.  
\end{enumerate}

Most of our focus will be on complexity of $\optsem$ and $\optsemval$ problems over the Viterbi semiring. We call the corresponding computational problems $\optrust$ and $\optrustval$ respectively. We call the extended interpretation function $\sem$ as $\trust$ in this case. 

\begin{definition}[\maxsat\ and \maxsatval]
Given a propositional formula $\varphi$ in CNF form, the \maxsat\  problem is to compute an assignment of $\varphi$ that satisfies the maximum number of clauses.  Given a propositional formula $\varphi$ in CNF form, the \maxsatval\  problem is to compute the  maximum number of clauses of $\varphi$ that can be satisfied. 
\end{definition}

We need a notion of reductions between functional problems. We use the notion of {\em metric reductions} introduced by Krentel~\cite{Krentel88}. 

\begin{definition}[Metric Reduction]
For two functions $f,g:\{0,1\}^* \rightarrow \{0, 1\}^*$, we say that $f$ metric reduces to $g$   
if there are polynomial-time computable functions $h_1$ and $h_2$ where $h_1:\{0,1\}^* \rightarrow \{0,1\}^*$ (the reduction function) and $h_2:\{0,1\}^* \times \{0,1\}^*\rightarrow \{0,1\}^*$ so that for any 
$x$, $f(x) = h_2(x,g(h_1(x)))$. 
\end{definition}

\begin{definition}
For a function $t:\mathbb{N} \rightarrow \mathbb{N}$, ${\rm FP}^{\NP[t(n)]}$ denotes the class of functions that can be solved in polynomial-time with $O(t(n))$ queries to an $\NP$ oracle where $n$ is the size of the input. When $t(n)$ is some polynomial, we denote the class by ${\rm FP}^{\NP}$. 
\end{definition}

Metric reductions are used to define notions of completeness and hardness for function classes  ${\rm FP}^{\NP}$ and ${\rm FP}^{\NP[\log]}$.
The following result due to Krentel~\cite{Krentel88} characterizes the complexity of the $\maxsatval$ problem.

\begin{theorem}[\cite{Krentel88}]
\maxsatval\ is complete for  ${\rm FP}^{\NP[\log]}$ under metric reductions. 
\end{theorem}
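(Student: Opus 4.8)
The plan is to prove both directions of completeness separately: membership of \maxsatval\ in ${\rm FP}^{\NP[\log]}$, and ${\rm FP}^{\NP[\log]}$-hardness under metric reductions. For membership, I would first observe that $\maxsatval(\varphi)$ is an integer in $\{0,1,\dots,m\}$, where $m\le n$ is the number of clauses. Consider the language $L=\{(\varphi,k) : \text{at least } k \text{ clauses of } \varphi \text{ are simultaneously satisfiable}\}$, which is in \NP\ (guess an assignment, count the satisfied clauses, accept iff the count is $\ge k$). A polynomial-time machine can then binary-search for the largest $k$ with $(\varphi,k)\in L$, using $\lceil\log_2(m+1)\rceil=O(\log n)$ queries to this \NP\ oracle. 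Hence $\maxsatval\in {\rm FP}^{\NP[\log]}$.

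For hardness, let $f\in{\rm FP}^{\NP[\log]}$ be computed by a polynomial-time machine $M$ that makes $k=O(\log n)$ adaptive queries to \NP\ (without loss of generality to $\mathrm{SAT}$). The decisive quantitative fact is that $2^{k}=\mathrm{poly}(n)$, so the tree of all possible answer prefixes has polynomial size. In the reduction function $h_1$ I would simulate $M$ over every answer prefix $p\in\{0,1\}^{<k}$; since the $i$-th query depends only on the first $i-1$ answers, each prefix $p$ determines a single query formula $Q_p$, and all the $Q_p$ can be computed in polynomial time. The correct computation of $M$ follows the unique path through this tree that, at each node $p$, branches to the child bit $[\,Q_p\in\mathrm{SAT}\,]$. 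The entire task therefore reduces to recovering this $k$-bit path — exactly $O(\log n)$ bits, matching the $O(\log n)$-bit output of \maxsatval, which is what allows a single metric-reduction call to suffice.

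The core of the reduction is to build a single CNF $\psi$ whose maximum number of satisfiable clauses encodes this path in binary, with earlier queries given higher priority. For each node $p$ at depth $i-1$ I would introduce path-selection variables $b_1,\dots,b_k$ together with a fresh copy $W_p$ of the variables of $Q_p$, and a reward gadget worth $2^{k-i}$ that can be fully satisfied exactly when the path passes through $p$ and takes the branch $b_i=1$ (i.e.\ $b_1=p_1,\dots,b_{i-1}=p_{i-1}$, $b_i=1$) and $W_p$ is a satisfying assignment of $Q_p$. Realizing weight $2^{k-i}\le\mathrm{poly}(n)$ by duplicating the corresponding clauses, and choosing the weights so that the reward of bit $i$ strictly exceeds the total reward of all later bits, forces the optimum to behave greedily: it sets $b_1=1$ iff $Q_{\mathrm{root}}$ is satisfiable (providing a witness in $W_{\mathrm{root}}$), and inductively sets each $b_i$ to $[\,Q_p\in\mathrm{SAT}\,]$ along the true path. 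Reading the binary digits of $\maxsatval(\psi)$ then yields the path; the post-processing function $h_2$ recomputes $M$'s queries and answers from it and outputs $f(x)$.

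I expect the main obstacle to be making each reward strictly \emph{all-or-nothing} inside \maxsatval, which only credits individual clauses: one must ensure that a gadget yields its full weight only when its entire condition holds, and that no assignment can claim a reward either off the selected path or without a genuine $\mathrm{SAT}$-witness. The standard remedy is to attach to each gadget a selector variable $z_p$, replace each condition clause $C$ by $(\neg z_p\vee C)$, and add a single heavily duplicated unit clause $(z_p)$, so that the heavy block is won only if every condition is met (otherwise the optimum is forced to set $z_p=0$, conceding just that block). Verifying that the priority weights dominate correctly even in the presence of these auxiliary consistency clauses, and confirming that the total clause count stays polynomial, is the delicate bookkeeping at the heart of the argument.
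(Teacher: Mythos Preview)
The paper does not prove this theorem at all; it is quoted verbatim as a result of Krentel~\cite{Krentel88} and used as a black box (specifically, to conclude ${\rm FP}^{\NP[\log]}$-hardness of \optrustval\ in Theorem~\ref{thm:hardness}). There is therefore no ``paper's own proof'' to compare your proposal against.

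That said, your sketch is a faithful outline of Krentel's original argument. The membership direction is exactly right. For hardness, your plan---enumerate the polynomial-size query tree, weight the gadget at depth $i$ by $2^{k-i}$ so that the optimum greedily recovers the true oracle-answer path bit by bit, and read the path off the binary expansion of $\maxsatval(\psi)$---is the standard construction. You have also correctly identified the one genuinely delicate point: without the selector-variable trick (or an equivalent device), a \maxsat\ solver could harvest partial credit from gadgets off the true path or from gadgets whose \textsf{SAT}-witness is absent, which would scramble the binary encoding. Your proposed fix (attach $z_p$, replace each clause $C$ by $(\neg z_p\vee C)$, and duplicate the unit clause $(z_p)$ heavily) is the right idea; in a full proof you would still need to verify that the baseline contribution of all the auxiliary $(\neg z_p\vee C)$ clauses is a fixed constant that can be subtracted off in $h_2$, so that the residual value is exactly $\sum_i b_i^{\,*}\,2^{k-i}$.
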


The following proposition is a basic ingredient in our results. It can be proved using basic calculus. 

\begin{proposition}\label{prop:basic}
Let $f(x) =  x^a(1-x)^b$ where $a,b$ are non-negative integers, the maximum value of $f(x)$ over the domain $[0, 1]$ is attained when $x = \frac{a}{a+b}$. The maximum value of the function is $\left(\frac{a}{a+b}\right)^a\left(\frac{b}{a+b}\right)^b$.

\end{proposition}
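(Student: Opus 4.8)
The plan is to prove this via standard single-variable calculus. First I would handle the boundary and degenerate cases: if either $a=0$ or $b=0$, the claim reduces to an easy monotonicity statement (e.g. if $b=0$ then $f(x)=x^a$ is increasing on $[0,1]$, maximized at $x=1$, and the formula $a/(a+b)=1$ correctly identifies this), so I would assume $a,b\geq 1$, which also ensures $f(0)=f(1)=0$ and hence the maximum is attained in the open interval $(0,1)$.

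Next, for the main case $a,b\geq 1$, I would differentiate. Writing $f(x)=x^a(1-x)^b$ and applying the product rule gives
\[
f'(x) = a\,x^{a-1}(1-x)^b - b\,x^a(1-x)^{b-1} = x^{a-1}(1-x)^{b-1}\bigl(a(1-x) - bx\bigr).
\]
On the open interval $(0,1)$ the factors $x^{a-1}$ and $(1-x)^{b-1}$ are strictly positive, so the sign of $f'(x)$ is governed entirely by the linear factor $a(1-x)-bx = a-(a+b)x$. Setting this to zero yields the unique critical point $x=\frac{a}{a+b}$ in $(0,1)$. I would then observe that $f'$ is positive for $x<\frac{a}{a+b}$ and negative for $x>\frac{a}{a+b}$, so this critical point is indeed a global maximum on $[0,1]$.

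Finally, I would substitute $x=\frac{a}{a+b}$ back into $f$ to compute the maximum value:
\[
f\!\left(\frac{a}{a+b}\right) = \left(\frac{a}{a+b}\right)^a\left(1-\frac{a}{a+b}\right)^b = \left(\frac{a}{a+b}\right)^a\left(\frac{b}{a+b}\right)^b,
\]
which is exactly the claimed expression.

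There is no real obstacle here; the proposition is elementary. The only mild care needed is the bookkeeping of the degenerate cases $a=0$ or $b=0$ (and the trivial case $a=b=0$ where $f\equiv 1$), so that the stated formula $x=\frac{a}{a+b}$ remains meaningful and correct; in the main nondegenerate regime the argument is a routine first-derivative computation.
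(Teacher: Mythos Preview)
Your proposal is correct and follows exactly the approach the paper indicates: the paper does not spell out a proof but simply remarks that the proposition ``can be proved using basic calculus,'' which is precisely the first-derivative computation you carry out, including your careful handling of the degenerate cases $a=0$ or $b=0$.
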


\section{Computational Complexity of Confidence Maximization}\label{sec:comptrust}

For semantics over Viterbi semiring we assume the standard closed world semantics and use the negation function $\negation(x) = 1-x$. Thus we have $\trust(\neg x,\pi) + \trust(x,\pi) = 1$. However, our upper bound proofs go through for any reasonable negation function. We discuss this  in Remark~\ref{rem:closed}. 

Since $\trust(\varphi,\pi)$ can be computed in polynomial time, $\optrust$ is at least as hard as $\optrustval$. The following observation states that computing $\optrustval$ and $\optrust$ are $\NP$-hard.
\begin{observation}\label{obs:sat}
For a  formula $\varphi$,  $\optrustval(\varphi) = 1$  if and only if $\varphi$ satisfiable. Hence
both \optrust\ and \optrustval\ are NP-hard. 
\end{observation}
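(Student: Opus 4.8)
The plan is to reduce both directions to the elementary fact that, on $\{0,1\}$-valued interpretations, the Viterbi operations $\max$ and $\cdot$ collapse exactly to Boolean disjunction and conjunction. First I would observe that every $\mathbb{V}$-interpretation takes values in $[0,1]$, so $\trust(\varphi,\pi)\le 1$ for every $\pi$ and hence $\optrustval(\varphi)\le 1$ unconditionally; the content of the statement is therefore entirely about when the value $1$ is attained.

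For the ($\Leftarrow$) direction, suppose $\varphi$ is satisfiable via a Boolean assignment $\sigma$. I would let $\pi$ be the $\{0,1\}$-valued interpretation with $\pi(x)=1$ when $\sigma(x)$ is true and $\pi(x)=0$ otherwise. A one-line induction on the formula tree $T_\varphi$ shows that $\trust(\psi,\pi)\in\{0,1\}$ for every subformula $\psi$, and that $\trust(\psi,\pi)=1$ precisely when $\sigma$ satisfies $\psi$: at a leaf this is the definition of $\negation(x)=1-x$ restricted to $\{0,1\}$, at a $\vee$-node it is the behavior of $\max$, and at a $\wedge$-node it is the behavior of the product. Applying this at the root gives $\trust(\varphi,\pi)=1$, whence $\optrustval(\varphi)=1$.

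The ($\Rightarrow$) direction is the part that needs care, since a priori a real-valued $\pi$ with intermediate values in $(0,1)$ might conceivably attain value $1$ on an unsatisfiable formula. The key observation ruling this out is that the constraint ``value $=1$'' propagates rigidly down $T_\varphi$: at a $\wedge$-node the product of two numbers in $[0,1]$ equals $1$ only if both factors equal $1$, and at a $\vee$-node the $\max$ equals $1$ only if some child equals $1$. I would make this precise by induction. Assuming $\trust(\varphi,\pi)=1$, define a Boolean assignment $\sigma$ by setting $\sigma(x)$ to true if $\pi(x)=1$ and to false otherwise (the choice on variables with $\pi(x)\in(0,1)$ is irrelevant), and then prove that for every subformula $\psi$, $\trust(\psi,\pi)=1$ implies $\sigma$ satisfies $\psi$. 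The base case uses that $\trust(x,\pi)=1$ forces $\pi(x)=1$ while $\trust(\neg x,\pi)=1$ forces $\pi(x)=0$, so the selected literal is consistent with $\sigma$; the inductive step is exactly the rigid propagation above. At the root this yields a satisfying assignment, so $\varphi$ is satisfiable.

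Finally, the NP-hardness claims follow immediately from this equivalence: SAT metric reduces to $\optrustval$ by computing $\optrustval(\varphi)$ and testing whether it equals $1$, and to $\optrust$ by computing an optimal $\pi^*$ and evaluating $\trust(\varphi,\pi^*)$ in polynomial time (consistent with the already-noted fact that $\optrust$ is at least as hard as $\optrustval$). The only real obstacle is the rigidity argument in the ($\Rightarrow$) direction; everything else is a direct $\{0,1\}$-versus-Boolean correspondence.
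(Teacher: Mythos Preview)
Your argument is correct. The paper does not actually supply a proof for this observation; it simply states it and moves on, treating the equivalence with Boolean satisfiability as immediate. Your write-up fills in exactly the details one would expect: the $\{0,1\}$-valued interpretation for the easy direction, and the rigid downward propagation of the value $1$ through $\wedge$-nodes (since a product of $[0,1]$-values equals $1$ only if each factor does) and $\vee$-nodes for the converse. One tiny point you leave implicit is that the maximum over interpretations is attained, so that $\optrustval(\varphi)=1$ really furnishes a $\pi$ with $\trust(\varphi,\pi)=1$; this is immediate from continuity of $\trust(\varphi,\cdot)$ on the compact cube $[0,1]^n$, but worth a half-sentence if you want the argument to be fully self-contained.
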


While both $\optrust$ and $\optrustval$ are $\NP$-hard, we would like to understand their relation to other maximization problems. In the study of optimization problems, the complexity classes ${\rm FP}^{\NP}$ and ${\rm FP}^{\NP[\log]}$ play a key role. In this section, we investigate both upper and lower bounds for these problems in relation to the classes ${\rm FP}^{\NP}$ and ${\rm FP}^{\NP[\log]}$.

\paragraph{An Illustrative Example.} We first provide an illustrative example that gives an idea behind the upper bound. Consider the formula 
$\phi(x_1,x_2) = (x_1)\wedge (x_2)\wedge (\neg x_1 \vee \neg x_2)$. Clearly, the formula is not satisfiable. Over the Viterbi semiring the value of the $\optrustval = \max\limits_{x_i\in [0,1]}\left\{x_1x_2(1-x_1), x_1x_2(1-x_2)\right\}$ by distributivity. This is maximized when (by Proposition~\ref{prop:basic})  $x_1=1$ and $x_2=0.5$ or $x_1=0.5$ and $x_2=1$, leading to an optimum value of $0.25$. 
In the following section, we show that the computation of {\optrustval} reduces to maximization over a set of polynomial terms wherein each polynomial term  corresponds to a {\em proof tree}, which we define. While the number of polynomial terms could be exponential, we use an NP oracle to binary search for the term that gives the maximum value. 

\subsection{An Upper Bound for General Formulae}\label{subsec:general}
We show that $\optrustval$ and $\optrust$  can be computed in polynomial-time with oracle queries to an $\NP$ language. 

\begin{theorem}\label{theorem:pnp}
\optrustval\ for formulas in negation normal form is in ${\rm FP}^{\NP}$. 
\end{theorem}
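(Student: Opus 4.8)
The plan is to show that $\optrustval(\varphi)$ is computable in $\mathrm{FP}^{\NP}$ by establishing two facts: first, that the optimal value is a rational number representable with polynomially many bits; and second, that we can binary search for this value (and an accompanying witness interpretation) using polynomially many queries to an $\NP$ oracle. The starting point, hinted at in the illustrative example, is to observe that by distributivity over the Viterbi semiring (where $\vee$ becomes $\max$ and $\wedge$ becomes product), $\trust(\varphi,\pi)$ can be written as a maximum, over a collection of \emph{proof trees} of $T_\varphi$, of a product of literal values. A proof tree is obtained by selecting, at each $\vee$-node, exactly one child (since $\max$ picks one summand) while keeping both children at each $\wedge$-node. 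Each such proof tree $\tau$ contributes a monomial $\prod_i x_i^{a_i(\tau)}(1-x_i)^{b_i(\tau)}$, where $a_i(\tau)$ and $b_i(\tau)$ count the occurrences of the literals $x_i$ and $\neg x_i$ along $\tau$. Thus
\[
\optrustval(\varphi) = \max_{\tau}\; \max_{\pi}\; \prod_{i=1}^n \pi(x_i)^{a_i(\tau)}\bigl(1-\pi(x_i)\bigr)^{b_i(\tau)}.
\]

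First I would fix a single proof tree $\tau$ and maximize its monomial over all interpretations $\pi$. Since the variables appear in independent factors, the product is maximized by optimizing each factor $x^{a_i}(1-x)^{b_i}$ separately, and Proposition~\ref{prop:basic} tells us the optimum of each factor is attained at $x = a_i/(a_i+b_i)$ with value $(a_i/(a_i+b_i))^{a_i}(b_i/(a_i+b_i))^{b_i}$. Because each $a_i,b_i \le m$ (they are bounded by the formula size), the optimal value for a fixed $\tau$ is a ratio of integers whose numerators and denominators are products of numbers at most $m^m$, hence expressible with $O(m\log m)$ bits. This immediately gives the crucial representability claim: the overall optimum, being a maximum over such per-tree optima, is likewise a rational of polynomial bit-length.

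Next I would set up the binary search. Define the $\NP$ language $L = \{\langle \varphi, v\rangle : \exists\,\text{a proof tree }\tau\text{ of } T_\varphi \text{ with } \max_\pi(\text{monomial of }\tau) \ge v\}$. Membership is verifiable in $\NP$: nondeterministically guess the proof tree $\tau$ (a choice of one child at each $\vee$-node, of size polynomial in $m$), compute the exponents $a_i,b_i$, evaluate the closed-form optimum from Proposition~\ref{prop:basic}, and compare it with $v$ — all in polynomial time using the bounded-bit representation. Since $\optrustval(\varphi)$ lies in a finite set of rationals with denominators bounded by $m^{m}$, binary search over this range pins down the exact optimal value with polynomially many queries to $L$. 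This is exactly where the paper's remark about \emph{Farey sequences} enters: the candidate optimal values are fractions with bounded denominator, so once the binary search localizes the value to a sufficiently narrow interval (of width below the minimum gap between two such fractions, which is at least $1/m^{2m}$), the unique Farey fraction in that interval is the answer, recoverable in polynomial time. For $\optrust$, having computed the optimal value, I would use further $\NP$ queries to extract, bit by bit, an optimal proof tree and then output the interpretation $\pi^*$ defined by $\pi^*(x_i) = a_i/(a_i+b_i)$.

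The main obstacle I anticipate is making the proof-tree decomposition fully rigorous and ensuring the $\NP$ verification is genuinely polynomial. Two subtleties deserve care: one must confirm that the single interpretation $\pi^*$ optimizing the \emph{best} proof tree actually achieves the global maximum of $\trust(\varphi,\pi)$ — i.e. that taking the maximum over trees \emph{before} versus \emph{after} choosing $\pi$ coincides, which follows because for any fixed $\pi$ the tree realizing the max of the distributed expression equals $\trust(\varphi,\pi)$, so $\max_\pi \max_\tau = \max_\tau \max_\pi$. The other is verifying the comparison $(a_i/(a_i+b_i))^{a_i}(b_i/(a_i+b_i))^{b_i} \ge v$ in polynomial time without computing irrational logarithms: this reduces to comparing two integers (after clearing denominators and cross-multiplying the rational $v$), which is feasible given the polynomial bit-length bound. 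Handling the boundary cases $a_i=0$ or $b_i=0$ (where the optimum sits at an endpoint of $[0,1]$) must be folded into the argument as well.
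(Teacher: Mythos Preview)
Your proposal is correct and follows essentially the same approach as the paper: decompose via proof trees, optimize each tree's monomial using Proposition~\ref{prop:basic}, observe the optimum is a bounded-denominator rational, and binary search with an $\NP$ oracle using the Farey gap bound. One small slip: the denominator bound should be $m^{nm}$ (and the gap $1/m^{2nm}$), since you multiply $n$ factors each bounded by $m^m$; also, the paper uses a second $\NP$ language to recover the Farey fraction in the final interval, whereas you assert this is polynomial-time---either route works.
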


\noindent{\em Proof Idea:} In order to show that \optrustval\ is in ${\rm FP}^{\NP}$, we use a binary search strategy using a language in $\NP$. One of the challenges is that the confidence value could potentially be any real number in $[0,1]$ and thus apriori we may not be able to bound the number of binary search queries.  However, we first argue  that for any formula $\varphi$ on $n$ variables and with size $m$, $\optrust(\varphi)$ 
is a fraction of the form $A/B$ where $1\leq A \leq B\leq 2^{nm\log m}$.  Ordered  fractions of such form are known as {\em Farey} sequence of order $2^{nm\log m}$ (denoted as ${\mathcal F}_{2^{nm\log m}}$). Thus our task is to do a binary search over ${\mathcal F}_{2^{nm\log m}}$ with time complexity $O(nm\log m)$. However, in general binary search for an unknown element in the Farey sequence ${\mathcal F}_N$ with time complexity $O(\log N)$ appears to be unknown. We overcome this difficulty by using an $\NP$ oracle to aid the binary search. We will give the details now.      

\begin{definition}
Let $\varphi(x_1,\cdots,x_n)$ be a propositional formula in negation normal form with size $m$. Let $T_\varphi$ be its formula tree. A proof tree $T$ of $T_\varphi$ is a subtree obtained by the following process: for every OR node $v$, choose one of the sub-trees of $v$. For every AND node $v$, keep all the subtrees.   
\end{definition}

Note that in a proof tree every OR node has only one child. 

\begin{definition}
Let $\varphi(x_1,\cdots,x_n)$ be a propositional formula in negation normal form and let  $T$ be a proof tree. We define the {\em proof tree polynomial} $p_T$ by inductively defining a polynomial for the subtree at every node $v$ (denoted by $p_v$): If the node $v$ is a variable $x_i$, the polynoimal is $x_i$ and if it is $\neg x_i$, the polynomial is $(1-x_i)$.  If $v$ is an AND node with children $v_1,\ldots,v_s$, then $p_v = \prod_{i=1}^s p_s$. If $v$ is an OR node with a child $u$, then $p_v = p_u$.       
\end{definition}

\begin{claim}\label{clm:prooftree}
Let $\varphi(x_1,\cdots,x_n)$ be a propositional formula in negation normal form and let $T$ be a proof tree of $\varphi$. 
\begin{enumerate}

\item The proof tree polynomial $p_T$ is of the form
\begin{equation*}
\prod_{i=1}^n x_i^{a_i}(1-x_i)^{b_i}
\end{equation*}

where $0\leq a_i+b_i \leq m.$
\item For a $\mathbb{V}$-interpretation $\pi$,
\[\trust(T,\pi) = p_T\left(\pi(x_1),\ldots,\pi(x_n)\right ).\]

\item Both $\optrust(T)$ and $\optrustval(T)$ can be computed in polynomial-time.

\item\label{eqn:valT} $\optrustval(T) = \Pi_{i=1}^n \left(\frac{a_i}{a_i+b_i}\right)^{a_i}\left(\frac{b_i}{a_i+b_i}\right)^{b_i}.$

\end{enumerate}
\end{claim}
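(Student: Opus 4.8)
The plan is to establish the four parts in the stated order: parts~1 and~2 by structural induction on the proof tree $T$, and parts~3 and~4 by a separability argument combined with Proposition~\ref{prop:basic}. For part~1, I would induct on the structure of $T$. At a leaf labeled $x_i$ (resp.\ $\neg x_i$) the polynomial is $x_i$ (resp.\ $1-x_i$), already of the claimed form, and a constant leaf labeled $1$ contributes a neutral factor. At an OR node the proof tree retains a single child $u$, so $p_v = p_u$ inherits the form by the inductive hypothesis; at an AND node with children $v_1,\dots,v_s$ we have $p_v = \prod_j p_{v_j}$, and a product of expressions of the form $\prod_i x_i^{a_i}(1-x_i)^{b_i}$ is again of this form with exponents added. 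Collecting exponents over the whole tree, $a_i$ (resp.\ $b_i$) equals the number of leaves of $T$ labeled $x_i$ (resp.\ $\neg x_i$). Since $T$ is a subtree of $T_\varphi$, the total number of literal-labeled leaves of $T$ is at most that of $T_\varphi$, namely the size $m$; hence $\sum_i (a_i+b_i) \le m$, giving $0 \le a_i + b_i \le m$ for each $i$.

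For part~2, I would again induct on $T$, comparing the Viterbi evaluation $\trust(T,\pi)$ against the polynomial evaluation $p_T(\pi(x_1),\dots,\pi(x_n))$. The base cases use the negation rule $\negation(x)=1-x$: $\trust(x_i,\pi)=\pi(x_i)$ and $\trust(\neg x_i,\pi)=1-\pi(x_i)$ match $x_i$ and $1-x_i$. The point specific to proof trees is that every OR node has a single child, so the Viterbi addition $\max$ at such a node is a maximum over one value and thus equals that child's value, matching $p_v=p_u$; at an AND node the Viterbi product is ordinary real multiplication, matching the product defining $p_v$. The induction then closes.

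For parts~3 and~4, the crucial observation is that the factors $x_i^{a_i}(1-x_i)^{b_i}$ of $p_T$ involve pairwise disjoint variables, so maximizing the product over $[0,1]^n$ \emph{decomposes} into independently maximizing each single-variable factor over $[0,1]$. Applying Proposition~\ref{prop:basic} to each factor, the maximum of $x_i^{a_i}(1-x_i)^{b_i}$ is attained at $\pi(x_i)=a_i/(a_i+b_i)$, with value $(a_i/(a_i+b_i))^{a_i}(b_i/(a_i+b_i))^{b_i}$; the degenerate cases (one exponent zero, giving an endpoint optimum, or $a_i=b_i=0$, giving the neutral factor $1$) reduce correctly under the convention $0^0=1$. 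Multiplying the per-variable maxima yields exactly the expression of part~4, and the assignment $\pi(x_i)=a_i/(a_i+b_i)$ is an optimal $\optrust(T)$ witness. Computationally, the exponents $a_i,b_i$ are obtained by one traversal of $T$ counting leaf labels, and both the witness and the value can then be assembled directly.

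The step I expect to require the most care is the polynomial bit-size and polynomial-time bookkeeping underlying part~3 rather than the inductions: one must verify that the product of the $n$ rationals $(a_i/(a_i+b_i))^{a_i}(b_i/(a_i+b_i))^{b_i}$, each with exponents at most $m$ and hence numerator and denominator bounded by $m^{O(m)}$, has polynomial bit-length and is computable in polynomial time, while handling the cases $a_i+b_i=0$ uniformly. This size bound is precisely what is invoked later, through the Farey-sequence argument, in the proof of Theorem~\ref{theorem:pnp}.
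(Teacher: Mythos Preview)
Your proposal is correct and follows essentially the same approach as the paper: parts~1 and~2 by structural induction over the proof tree (the paper calls this a ``routine induction'' and ``from the definitions''), and parts~3 and~4 by separating the product into single-variable factors and applying Proposition~\ref{prop:basic}. If anything, you are more careful than the paper---it does not explicitly discuss the degenerate $a_i+b_i=0$ case or the bit-size bookkeeping, though the latter does surface later in Claim~\ref{claim:abyb}.
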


\begin{proof}
Item (1) follows from the definition of the proof tree polynomial and a routine induction and the fact that the size of the formula $\varphi$ is $m$.  Item (2) follows from the definitions. 

Note that the polynomial $\pi_{i=1}^n x_i^{a_i} (1-x_i)^{b_i}$ can be maximized by maximizing each of the individual terms $x_i^{a_i}(1-x_i)^{b_i}$. By Proposition~\ref{prop:basic}, the maximum value for a polynomial of this form is achieved at $x_i = \frac{a_i}{a_i+b_i}$. Thus the interpretation $\pi(x_i) = \frac{a_i}{a_i+b_i}$ is an optimal $\mathbb{V}$-interpretation that can be computed in polynomial-time. Since $0 \leq a_i+b_i \leq m$, $\optrustval$ also can be computed in polynomial-time.
Item~(4)  follows from Item~(3), by substituting the values $\pi(x_i)$ for in the polynomial $p_T$. 
\end{proof}

The next claim relates $\optrust$ of the formula $\varphi$ to $\optrust$ of its proof trees.  The proof of this claim follows from the definition of proof tree and standard induction.
\begin{claim}\label{clm:maxtree}
For a formula $\varphi$, 
\[
\optrustval(\varphi) = \max_{T}\optrustval(T)
\]
where maximum is taken over all proof trees $T$ of $T_\varphi$. 
If $T^*$ is the proof tree for which $\optrust(T)$ is maximized, then $\optrust(T^*) = \optrust(\varphi)$.
\end{claim}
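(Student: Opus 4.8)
The plan is to first establish a \emph{pointwise} identity: for every fixed $\mathbb{V}$-interpretation $\pi$,
\[
\trust(\varphi,\pi) = \max_T \trust(T,\pi),
\]
where the maximum ranges over all proof trees $T$ of $T_\varphi$. Once this is in hand, the value statement follows by taking $\max_\pi$ on both sides and interchanging the two maxima, and the interpretation statement follows by evaluating an optimal proof tree at its optimal interpretation. I would prove the pointwise identity by structural induction on the formula tree $T_\varphi$ (equivalently, on its height).

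For the base case, a leaf is a literal, $0$, or $1$; here there is a unique proof tree, namely the leaf itself, and by Claim~\ref{clm:prooftree}(2) its value under $\trust$ equals $p_T(\pi)$, so the identity is immediate. For an OR node with children $\alpha,\beta$, recall that over the Viterbi semiring the operation $+$ is $\max$, so $\trust(\alpha\vee\beta,\pi)=\max\{\trust(\alpha,\pi),\trust(\beta,\pi)\}$. Since a proof tree rooted at an OR node selects exactly one child and then a proof tree of that child, the collection of values $\{\trust(T,\pi)\}$ over proof trees $T$ of $\alpha\vee\beta$ is precisely the union of the corresponding collections for $\alpha$ and for $\beta$; taking the maximum and applying the inductive hypothesis to each child recovers $\max\{\trust(\alpha,\pi),\trust(\beta,\pi)\}=\trust(\alpha\vee\beta,\pi)$.

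The crux is the AND case, which I expect to be the only real subtlety. For an AND node with children $\alpha,\beta$ we have $\trust(\alpha\wedge\beta,\pi)=\trust(\alpha,\pi)\cdot\trust(\beta,\pi)$, and a proof tree $T$ rooted here is obtained by \emph{independently} choosing a proof tree $T_\alpha$ of $\alpha$ and a proof tree $T_\beta$ of $\beta$, with $\trust(T,\pi)=\trust(T_\alpha,\pi)\cdot\trust(T_\beta,\pi)$. Hence
\[
\max_T \trust(T,\pi) = \max_{T_\alpha,\,T_\beta} \trust(T_\alpha,\pi)\cdot\trust(T_\beta,\pi) = \Big(\max_{T_\alpha}\trust(T_\alpha,\pi)\Big)\cdot\Big(\max_{T_\beta}\trust(T_\beta,\pi)\Big).
\]
The last equality — that the maximum of a product factors into the product of the maxima — is exactly the step that requires justification: it relies on every value $\trust(T_\alpha,\pi),\trust(T_\beta,\pi)$ being \emph{non-negative}, which holds because all semiring elements and the negation $1-x$ lie in $[0,1]$, so each factor is monotone non-decreasing in the other. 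Applying the inductive hypothesis to $\alpha$ and $\beta$ turns the right-hand side into $\trust(\alpha,\pi)\cdot\trust(\beta,\pi)=\trust(\alpha\wedge\beta,\pi)$, completing the induction.

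Finally, I would assemble both assertions. Taking $\max_\pi$ of the pointwise identity and interchanging the (finitely many) choices of $T$ with the maximization over $\pi$ gives
\[
\optrustval(\varphi)=\max_\pi\trust(\varphi,\pi)=\max_\pi\max_T\trust(T,\pi)=\max_T\max_\pi\trust(T,\pi)=\max_T\optrustval(T),
\]
which is the first statement. For the second, let $T^*$ maximize $\optrustval(T)$ and let $\pi^*$ be an optimal $\mathbb{V}$-interpretation for $T^*$, which exists and is computable by Claim~\ref{clm:prooftree}(3). Then, using the pointwise identity and the definition of $\optrustval(\varphi)$,
\[
\optrustval(\varphi)\ge\trust(\varphi,\pi^*)=\max_T\trust(T,\pi^*)\ge\trust(T^*,\pi^*)=\optrustval(T^*)=\optrustval(\varphi),
\]
so every inequality is an equality and $\pi^*=\optrust(T^*)$ attains $\optrustval(\varphi)$; that is, the optimal interpretation of $T^*$ is an optimal interpretation of $\varphi$, giving $\optrust(T^*)=\optrust(\varphi)$.
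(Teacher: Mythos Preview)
Your proposal is correct and is precisely the ``standard induction'' the paper alludes to; the paper does not spell out a proof beyond the one-sentence remark that the claim ``follows from the definition of proof tree and standard induction.'' Your write-up simply fills in the details the authors omitted (the pointwise identity via structural induction, with the AND case relying on non-negativity to factor the maximum of a product), so the approaches are the same.
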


The above claim states that $\optrust(\varphi)$ can be computed by cycling through all proof trees $T$ of $\varphi$ and computing $\optrust(T)$. Since there could be exponentially many proof trees, this process would take exponential time. Our task is to show that this process can be done in ${\rm FP}^{\NP}$. To do this we establish a claim that restricts values that 
$\optrustval(\varphi)$ can take. We need the notion of {\em Farey sequence}. 

\begin{definition}
For any positive integer $N$, the {\em Farey sequence} of order $N$, denoted by ${\mathcal F}_N$, is the set of all irreducible fractions $p/q$  with $0< p < q \leq N$ arranged in increasing order.
\end{definition}

\begin{claim}\label{claim:abyb}
 
\begin{enumerate}
\item  For a propositional formula $\varphi(x_1, \cdots, x_n)$, $ \optrustval(\varphi)$ belongs to the Farey sequence ${\mathcal F}_{2^{nm\log m}}$.

\item For any two fractions $u$ and $v$ from ${\mathcal F}_{2^{nm\log m}}$, 
$|u - v| \geq 1/ 2^{2nm\log m}$
\end{enumerate}
\end{claim}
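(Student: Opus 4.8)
The plan is to prove the two parts separately, with Part~(1) resting on the explicit closed form for $\optrustval$ of a proof tree supplied by Claim~\ref{clm:prooftree}, and Part~(2) being an elementary property of fractions with bounded denominators. For Part~(1), I would start from Claim~\ref{clm:maxtree}, which gives $\optrustval(\varphi) = \optrustval(T^*)$ for some proof tree $T^*$ of $T_\varphi$. It therefore suffices to show that $\optrustval(T) \in \mathcal{F}_{2^{nm\log m}}$ for every proof tree $T$. By Item~(4) of Claim~\ref{clm:prooftree}, this value has the explicit form
\[
\optrustval(T) = \prod_{i=1}^n \left(\frac{a_i}{a_i+b_i}\right)^{a_i}\left(\frac{b_i}{a_i+b_i}\right)^{b_i} = \frac{\prod_{i=1}^n a_i^{a_i} b_i^{b_i}}{\prod_{i=1}^n (a_i+b_i)^{a_i+b_i}},
\]
which (with the convention $0^0 = 1$) is a ratio $P/Q$ of two positive integers.

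To bound the denominator, I would invoke the per-variable bound $a_i + b_i \le m$ from Item~(1) of Claim~\ref{clm:prooftree}. This yields $(a_i+b_i)^{a_i+b_i} \le m^m$ for each $i$, and multiplying over the $n$ variables gives
\[
Q = \prod_{i=1}^n (a_i+b_i)^{a_i+b_i} \le (m^m)^n = m^{nm} = 2^{nm\log m}.
\]
Writing $P/Q$ in lowest terms as $p/q$, the reduced denominator divides $Q$, so $q \le Q \le 2^{nm\log m}$. Since $\optrustval(\varphi) \in [0,1]$, with value $1$ exactly when $\varphi$ is satisfiable by Observation~\ref{obs:sat}, in the non-degenerate case we have $0 < p < q$ and hence $\optrustval(\varphi) \in \mathcal{F}_{2^{nm\log m}}$. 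The endpoint cases (value $1$, or a degenerate value $0$ arising from a proof tree that contains a constant-$0$ leaf) are detected directly by the $\NP$ oracle and need no binary search.

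For Part~(2), given distinct $u = p_1/q_1$ and $v = p_2/q_2$ in $\mathcal{F}_N$ with $N = 2^{nm\log m}$, I would simply cross-multiply:
\[
|u - v| = \frac{|p_1 q_2 - p_2 q_1|}{q_1 q_2}.
\]
The numerator is a nonzero integer because $u \ne v$, hence at least $1$ in absolute value, while $q_1 q_2 \le N^2$. Therefore $|u - v| \ge 1/N^2 = 1/2^{2nm\log m}$, as claimed.

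I do not expect a genuine obstacle here; the entire content of Part~(1) is already packaged in the closed-form formula of Claim~\ref{clm:prooftree}. The only point requiring care is the denominator estimate: one should bound the \emph{unreduced} denominator $Q$ and note that passing to lowest terms can only decrease it, rather than trying to control $q$ directly. The remaining items are bookkeeping, namely the $0^0 = 1$ convention, the handling of the degenerate values $0$ and $1$, and the one-line gap estimate in Part~(2).
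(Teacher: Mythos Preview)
Your proposal is correct and follows essentially the same approach as the paper: both parts invoke Claim~\ref{clm:maxtree} and Item~(4) of Claim~\ref{clm:prooftree} to write $\optrustval(\varphi)$ as a product of fractions with denominator $\prod_i (a_i+b_i)^{a_i+b_i} \le m^{nm} = 2^{nm\log m}$, and Part~(2) is the same cross-multiplication argument. Your version is in fact slightly more careful than the paper's in flagging the reduction to lowest terms and the degenerate endpoints $0$ and $1$.
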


\begin{proof}
By Claim~\ref{clm:maxtree}, $\optrustval(\varphi)$ equals $\optrustval(T)$, for some proof tree $T$. By Item~(4) of Claim~\ref{clm:prooftree} this value is a product of fractions, where the denominator of each fraction is of the form $(a_i+b_i)^{a_i+b_i}$ where  $a_i$ and $b_i$ are non-negative integers. Since $a_i+b_i \leq m$, each denominator is at most $m^m$, and thus the denominator of the product is bounded by $m^{nm} = 2^{nm \log m}$.  Since the numerator is at most the denominator, the claim follows.

For the proof of the second part, let $u = p_1/q_1$ and $v = p_2/q_2$, $u > v$. Now $u - v = (p_1q2-p_2q_1)/q_1q_2$. Since $q_1, q_2 \leq 2^{nm \log m}$, we have $u - v > p_1q_2-p_2q_1/2^{2nm\log m}$. Since $p_1, p_2, q_1$, $q_2$ are all integers, $p_1q_2 - p_2q_1 \geq 1$. Thus $|u -v | \geq 1/2^{2nm\log m}$.

\end{proof}

Consider the following language
\[L_{\it opt} = \{\langle \varphi, v\rangle~|~ {\rm \optrustval}(\varphi) \geq v\}\]

\begin{claim}\label{clm:loptnp}
$L_{\it opt}$ is in $\NP$.
\end{claim}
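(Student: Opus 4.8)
The plan is to exhibit a polynomial-time verifier for $L_{\it opt}$ that, on input $\langle\varphi,v\rangle$, guesses a certificate witnessing that some $\mathbb{V}$-interpretation achieves confidence at least $v$. By Claim~\ref{clm:maxtree}, $\optrustval(\varphi)=\max_T\optrustval(T)$ over proof trees $T$ of $T_\varphi$, and by Item~(4) of Claim~\ref{clm:prooftree} each $\optrustval(T)$ is determined entirely by the exponent vector $(a_1,b_1,\ldots,a_n,b_n)$ recording how many times each literal $x_i$ and $\neg x_i$ occurs in the proof tree polynomial $p_T$. So the natural certificate is a proof tree $T$ itself. The key observation that makes this work is that a proof tree has size polynomial in the size of $T_\varphi$: it is a subtree of $T_\varphi$ obtained by deleting, at each OR node, all but one child, so $T$ has at most as many nodes as $T_\varphi$ and can be described by naming, for each OR node of $T_\varphi$, which child is retained. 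Thus a proof tree is a polynomial-size witness.

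The verification proceeds in three steps. First, I would check in polynomial time that the guessed object $T$ is a legal proof tree of $T_\varphi$: traverse $T_\varphi$, confirm that every AND node keeps all its children and every OR node keeps exactly one, and that leaves are preserved. Second, from $T$ I would read off the exponents $a_i,b_i$ for each variable by counting occurrences of the literals $x_i$ and $\neg x_i$ among the leaves of $T$; since $0\le a_i+b_i\le m$, these counts are small integers computable in polynomial time. Third, I would compute $\optrustval(T)=\prod_{i=1}^n\left(\frac{a_i}{a_i+b_i}\right)^{a_i}\left(\frac{b_i}{a_i+b_i}\right)^{b_i}$ exactly as a rational number and accept if and only if this value is at least $v$. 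By Claim~\ref{claim:abyb}, this rational has numerator and denominator bounded by $2^{nm\log m}$, so it has polynomially many bits and the comparison with $v$ is a polynomial-time arithmetic check. Correctness is then immediate: $\langle\varphi,v\rangle\in L_{\it opt}$ iff $\optrustval(\varphi)\ge v$ iff some proof tree $T$ satisfies $\optrustval(T)\ge v$, which is exactly what an accepting computation certifies.

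The step I expect to require the most care is the exact rational arithmetic in the third step, specifically computing and comparing the product of terms $\left(\frac{a_i}{a_i+b_i}\right)^{a_i}\left(\frac{b_i}{a_i+b_i}\right)^{b_i}$ without incurring exponential blowup. The safeguard is precisely the bit-length bound from Claim~\ref{claim:abyb}: because $\optrustval(T)$ lies in ${\mathcal F}_{2^{nm\log m}}$, both the numerator and denominator of the fully reduced (or even unreduced) product fit in $O(nm\log m)$ bits, so all intermediate integer multiplications and the final comparison against $v$ run in time polynomial in the input size. One subtlety to handle cleanly is the boundary case $a_i+b_i=0$ (a variable not appearing in $T$), where the corresponding factor is the empty product $1$ and should simply be omitted, and the degenerate factors where $a_i=0$ or $b_i=0$, which contribute $1$ as well; handling these conventions consistently ensures the computed value matches $\trust(T,\pi^*)$ for the optimal interpretation.
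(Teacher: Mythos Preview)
Your proposal is correct and takes essentially the same approach as the paper: guess a proof tree $T$, compute $\optrustval(T)$ in polynomial time, and accept iff this value is at least $v$, with correctness following from Claim~\ref{clm:maxtree}. The paper's proof is terser, simply invoking Item~(3) of Claim~\ref{clm:prooftree} for the polynomial-time computability of $\optrustval(T)$, whereas you spell out the exponent extraction, the explicit rational from Item~(4), and the bit-length bound; this extra care is fine and arguably more informative, but the underlying argument is the same.
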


\begin{proof}
Consider the following non-deterministic machine $M$. On input $\varphi$, $M$ guesses a proof tree $T$ of $\varphi$: for every OR node, non-deterministically pick  one of the subtrees. For $T$, compute $\optrustval(T)$ and accept if $\optrustval(T) \geq v$. This can be done in polynomial-time  using  Item~(3) of Claim \ref{clm:prooftree}.  The correctness of this algorithm follows from Claim~\ref{clm:maxtree}. 
\end{proof}

We need a method that given two fractions $u$ and $v$ and an integer $N$, outputs a fraction $p/q: u\leq p/q \leq v$, and $p/q \in {\mathcal F}_N$.  We give an ${\rm FP}^{\NP}$ algorithm that makes $O(N)$ queries to the $\NP$ oracle to achieve this.  We first define the $\NP$ language $L_{\it farey}$. For this we fix any standard encoding of fraction using the binary alphabet. Such an encoding will have $O(\log N)$ bit representation for any fraction in ${\mathcal F}_N$.  
\[ L_{\it farey} = \{ \langle N,u,v, z \rangle \mid \exists z'; u \leq zz' \leq v ~\&~  zz' \in {\mathcal F}_N  \} \]

The following claim is easy to see. 

\begin{claim}
$L_{\it farey} \in \NP$.
\end{claim}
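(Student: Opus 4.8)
The claim to prove is that the language
\[
L_{\it farey} = \{ \langle N,u,v, z \rangle \mid \exists z';\ u \leq zz' \leq v \ \&\ zz' \in {\mathcal F}_N  \}
\]
belongs to $\NP$. The plan is to exhibit the obvious nondeterministic polynomial-time decision procedure and verify that the certificate $z'$ is short and that the predicate is efficiently checkable. First I would observe that the prefix $z$ encodes a partial string and the witness $z'$ is a completion, so that $zz'$ is a full candidate fraction; by the fixed encoding convention stated just before the claim, every fraction in ${\mathcal F}_N$ has an $O(\log N)$-bit representation. Hence the total length of $zz'$ is $O(\log N)$, and since $z$ is part of the input, the guessed suffix $z'$ has length $O(\log N)$, which is polynomial in the input size. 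This guarantees the certificate is polynomially bounded, the essential requirement for $\NP$ membership.

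The remaining step is to check that, given a fully specified candidate $zz'$, the three conditions can be verified in deterministic polynomial time. The nondeterministic machine guesses $z'$, decodes $zz'$ into a fraction $p/q$, and then verifies: (i) $p/q \in {\mathcal F}_N$, i.e.\ that $0 < p < q \leq N$ and $\gcd(p,q)=1$; and (ii) that $u \leq p/q \leq v$. Condition (i) requires a magnitude comparison $q \leq N$ and a single gcd computation, both polynomial in the bit-length of the numbers (Euclid's algorithm runs in time polynomial in $\log N$). Condition (ii) is a pair of comparisons between fractions, each reducible to comparing the integer products obtained by cross-multiplication, which is polynomial-time arithmetic on $O(\log N)$-bit integers. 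If all checks pass the machine accepts, and it accepts on some computation path if and only if a valid $z'$ exists, which is precisely the definition of membership in $L_{\it farey}$.

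Putting these together, the guessed witness is of polynomial length and the verification runs in polynomial time, so $L_{\it farey} \in \NP$. I do not anticipate a genuine obstacle here, since this is a routine certificate-and-verify argument; the only point that warrants care is the bookkeeping on the encoding, namely confirming that the concatenation $zz'$ is interpreted consistently as a single $O(\log N)$-bit fraction so that the witness length is genuinely polynomial rather than unbounded. Once the fixed encoding from the preceding remark is invoked, this concern dissolves and the claim follows immediately.
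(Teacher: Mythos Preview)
Your proposal is correct and is exactly the routine certificate-and-verify argument the paper has in mind; the paper itself simply declares the claim ``easy to see'' and omits the proof entirely. Your write-up supplies precisely the details (polynomial-length witness via the $O(\log N)$-bit encoding, gcd check, and cross-multiplication comparisons) that justify that remark.
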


Now we are ready to prove the Theorem~\ref{theorem:pnp}.

\begin{proof} ({\em of Theorem~\ref{theorem:pnp}}). 
The algorithm  performs a binary search over the range $[0, 1]$ by making adaptive queries $\langle \varphi,v\rangle$ 
to the $\NP$ language $L_{\it opt}$ starting with $v=1$. At any iteration of the binary search, we have an interval $I=[I_l, I_r]$ and with the invariant  $I_l \leq {\rm \optrustval}(\varphi) < I_r$. The binary search stops when the size of the interval $[I_l,I_r] =1/2^{2nm\log m}$. Since each iteration of the binary search reduces the size of the interval by a factor of 2, the search stops after making $2nm\log m$ queries to $L_{\it opt}$. The invariant ensures that $\optrustval(\varphi)$ is in this interval. Moreover, $\optrustval(\varphi) \in {\mathcal F}_{2^{nm\log m}}$ (by item (1) of Claim~\ref{claim:abyb}) and there are no other fractions from ${\mathcal F}_{2^{nm\log m}}$ in this interval (by item (2) of Claim~\ref{claim:abyb}).  Now, by making $O(nm\log m)$ queries to $L_{\it farey}$ with $N=2^{nm\log m}$, $u=I_l$, $v=I_r$, we can construct the binary representation of the unique fraction in ${\mathcal F}_{2^{nm\log m}}$ that lies between  $I_l$ and $I_r$ which is $\optrustval(\varphi)$.  
\end{proof}

Next we show the optimal $\mathbb{V}$-interpretation can also be computed in polynomial time with queries to an \NP\ oracle. 

\begin{theorem}\label{thm:pnpcompute}
\optrust\ for formulas in negation normal form can be computed in ${\rm FP}^{\NP}$.   
\end{theorem}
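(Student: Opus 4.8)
The plan is to reduce the computation of an optimal $\mathbb{V}$-interpretation $\pi^*$ to the value problem $\optrustval$, which we already know lies in ${\rm FP}^{\NP}$ by Theorem~\ref{theorem:pnp}. The key structural fact is Claim~\ref{clm:prooftree}, Item~(4): once we know the \emph{optimal proof tree} $T^*$ (more precisely, the exponent vector $(a_i,b_i)_{i=1}^n$ associated with $T^*$), the optimal interpretation is simply $\pi^*(x_i) = a_i/(a_i+b_i)$, computable in polynomial time. So the real task is not to search over real values but to identify the winning proof tree, and then read off the exponents.

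First I would compute $V^* := \optrustval(\varphi)$ using the ${\rm FP}^{\NP}$ algorithm of Theorem~\ref{theorem:pnp}. Next I would recover the optimal proof tree incrementally by making the OR-node choices one at a time, using a slightly augmented version of the $\NP$ language $L_{\it opt}$. Concretely, I would introduce the language
\[
L' = \{\langle \varphi, v, \sigma\rangle \mid \text{some proof tree $T$ consistent with the partial choice $\sigma$ has } \optrustval(T) \geq v\},
\]
where $\sigma$ is a record of which subtree has already been fixed at some of the OR nodes. This is in $\NP$ by exactly the guess-and-check argument of Claim~\ref{clm:loptnp}: the machine nondeterministically completes $\sigma$ to a full proof tree, computes $\optrustval(T)$ in polynomial time via Item~(3) of Claim~\ref{clm:prooftree}, and compares it to $v$. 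I would then walk down $T_\varphi$: at each OR node with children $u_1,\dots,u_t$, I query $\langle\varphi, V^*, \sigma\cup\{u_j\}\rangle$ for each candidate child $u_j$ in turn, and commit to the first child for which the oracle answers ``yes,'' appending that choice to $\sigma$. By Claim~\ref{clm:maxtree} there is at least one proof tree attaining $V^*$, so at every OR node at least one committed extension still attains the optimum; hence the process never gets stuck and after fixing all (at most $m$) OR nodes, $\sigma$ describes a full proof tree $T^*$ with $\optrustval(T^*) = V^* = \optrustval(\varphi)$. The total number of oracle queries is the number of queries for $V^*$ plus $O(m)$ for the descent, which is polynomial, so the whole procedure stays in ${\rm FP}^{\NP}$.

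Finally I would extract, from the committed tree $T^*$, the exponents $a_i,b_i$ for each variable by traversing $T^*$ and counting the occurrences of $x_i$ and $\neg x_i$ along the retained branches, and output the interpretation $\pi^*(x_i) = a_i/(a_i+b_i)$ (with any convention, say $\pi^*(x_i)=1$, when $a_i+b_i=0$, since such a variable is irrelevant). By Items~(2) and~(4) of Claim~\ref{clm:prooftree} this $\pi^*$ achieves $\trust(T^*,\pi^*) = \optrustval(T^*) = \optrustval(\varphi)$, and since $\trust(\varphi,\pi^*) \geq \trust(T^*,\pi^*)$ while $\optrustval(\varphi)$ is the global maximum, $\pi^*$ is in fact optimal for $\varphi$. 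The main obstacle I anticipate is the correctness of the greedy descent: I must argue carefully that fixing OR-node choices one at a time never sacrifices the global optimum, which is precisely what the ``$\geq V^*$'' threshold in $L'$ guarantees — as long as a consistent extension meeting the optimum survives, the invariant is preserved, so the subtlety is really just in phrasing the invariant rather than in any hard combinatorics.
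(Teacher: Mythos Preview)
Your proposal is correct and follows essentially the same approach as the paper: first compute $V^*=\optrustval(\varphi)$ via Theorem~\ref{theorem:pnp}, then recover an optimal proof tree by a self-reducibility/prefix-search style argument using an $\NP$ oracle, and finally read off $\pi^*(x_i)=a_i/(a_i+b_i)$ from the exponents in that tree via Claim~\ref{clm:prooftree}. The only cosmetic difference is that the paper phrases the search as a prefix search over a fixed string encoding of proof trees (with the language $L_{\it pt}$ using equality $\optrustval(T)=v$), whereas you phrase it as an explicit greedy descent fixing one OR-node child at a time (with threshold $\geq V^*$); these are interchangeable realizations of the same idea.
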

\begin{proof}
Let $\varphi$ be a propositional formula in negation normal form. We use a prefix search over the encoding of proof  trees of $\varphi$ using an $\NP$ language to isolate a proof tree $T$ such that $\optrustval(\varphi) = \optrustval(T)$. For this, we fix an encoding of proof trees of $\varphi$. Consider the following $\NP$ language $L_{\it pt}$:
\begin{align*}
\{\langle \varphi, v, z\rangle \mid \exists z': &zz' \mbox{encodes a proof tree $T$ of $\varphi$}\\
&~\&~ \optrustval(T) = v \}
\end{align*}

\begin{claim}\label{clm:lptnp}
$L_{\it pt}$ is in \NP. 
\end{claim}

\begin{proof}
Consider a non-deterministic machine that guesses a $z'$, verifies that $zz'$ encodes a proof tree $T$ of $\varphi$, and accepts if $\optrustval(T) = v$. By item (3) of Claim~\ref{clm:prooftree}, $\optrustval(T)$ can be computed in polynomial time.  
\end{proof}

To complete the proof Theorem \ref{thm:pnpcompute}, given a propositional formula $\varphi$, we first use ${\rm FP}^{\NP}$ algorithm from Theorem~\ref{theorem:pnp} to compute $v^* = \optrustval(\varphi)$. Now we can construct a proof tree $T$ of $\varphi$ so that $\optrustval(T) = v^*$ by a  prefix search using language $L_{\it pt}$.  
Now by Claim~\ref{clm:prooftree}, we can compute a $\mathbb{V}$-interpretation $\pi^*$ so that $\trust(T,\pi^*) = v^*$. Thus $\pi^*$ is an optimal $\mathbb{V}$-interpretation for $\varphi$, by Claim~\ref{clm:maxtree}.
\end{proof}

\ignore{
\subsubsection*{Upper Bound for DNF formulae} 

We note that for formulae presented in DNF form, both $\optrust$ and $\optrustval$ has polynomial-time algorithms. 
\begin{theorem}
Both $\optrust$ and $\optrustval$ can be computed in polynomial-time when the input formula is in DNF form.
\end{theorem}

\begin{proof}
If $\varphi$ is a DNF formula with $m$ terms, then the number of proof trees is exactly $m$, where each term of the formulas is a proof tree. The algorithms cycles through all proof trees $T$ of $\varphi$ and computes $\optrust(T)$ and $\optrustval(T)$ and outputs the $\optrustval(T)$ and an interpretation that achieves the maximum. The correctness follows from Claim~\ref{clm:maxtree}.
\end{proof}
}

\begin{remark}
\label{rem:closed}
We revisit the semantics of negation.  As stated earlier, by assuming the closed world semantics, we have $\negation(x) = 1-x$. We note that this assumption is not strictly necessary for the above proof to go through. Recall that Item (1) of Claim~\ref{clm:prooftree}  states that the proof tree polynomial is of the form $\prod x_i^{a_i}(1-x_i)^{b_i}$.
For a general negation function $\negation$, the proof tree polynomial is of the form $\prod x_i^{a_i}(\negation(x_i))^{b_i}$.  Now if the maximum value of a term $x^a(\negation(x))^b$  can be found, for example when $\negation$ is an explicit differentiable function, the result will hold.

\end{remark}

\subsection{Relation to $\maxsat$ for CNF Formulae}\label{subsec:maxsat}

In this section we study the $\optrustval$ problem for CNF formulae and establish its relation to the $\maxsat$ problem. We first exhibit an upperbound on the $\optrustval(\varphi)$ using the maximum number of satisfiable clauses.  Building on this result, in Section~\ref{sec:hardness} we show that $\optrustval$ for CNF formulae is hard for the complexity class $\FP^{\NP[\log]}$. 

We first define some notation that will be used in this  and next subsections.  Let $\varphi(x_1, \cdots x_n) = C_1 \wedge \cdots \wedge C_m$ be a CNF formula and let $\pi^*$ be an optimal $\mathbb{V}$-interpretation.  For each clause $C$ from $\varphi$, let $\pi^*(C)$ be the value achieved by this interpretation, i.e $\pi^*(C) = \trust(C, \pi^*)$. Observe that since $C$ is a disjunction of literals, $\pi^*(C) = \max_{\ell \in C}\{\pi^*(\ell)\}$. For a clause $C$, let
\[\ell_C = {\rm argmax}_{\ell \in C} \{\pi^*(\ell)\}\]

In the above, if there are multiple maximums, we take the smallest literal as $\ell_C$ (By assuming an order $x_1 < \neg{x_1} < x_2 < \neg{x_2} \cdots < x_n < \neg{x_n})$.  Observe  that, since we are working over the Viterbi semiring,  $\trust(C, \pi^*) = \pi^*(\ell_C)$.  A literal $\ell$ is {\em maximizing literal for a  clause $C$}, if $\ell_C = \ell$. 

Since $\varphi$ is a CNF formula, for any $\mathbb{V}$-interpretation $\pi$  $\trust(\varphi, \pi)$ is of the form 
$\Pi_{i=1}^m \trust(C_i, \pi)$.
Given a collection of clauses ${\mathcal D}$ from $\varphi$, the {\em contribution of ${\mathcal D}$ to $\trust(\varphi, \pi)$} is defined as $\Pi_{c \in {\mathcal D}} \trust(C, \pi)$.

The following theorem provides an upperbound on $\optrustval(\varphi)$ using $\maxsatval$. This is the main result of this section.

\begin{theorem}\label{thm:OptTrustBound}
Let $\varphi(x_1, \cdots, x_n)$ be a CNF formula with $m$ clauses. Let $r$ be the maximum number of clauses that can be satisfied.  Then  $\optrustval(\varphi) \leq 1/4^{(m-r)}$.
\end{theorem}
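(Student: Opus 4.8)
The plan is to reduce the statement, via Claims~\ref{clm:prooftree} and~\ref{clm:maxtree}, to a purely combinatorial fact about an optimal proof tree. For a CNF formula $\varphi = C_1 \wedge \cdots \wedge C_m$, a proof tree is exactly a choice of one literal from each clause, and by Claim~\ref{clm:maxtree} there is an optimal proof tree $T^*$ with $\optrustval(\varphi) = \optrustval(T^*)$. For each variable $x_i$, let $a_i$ denote the number of clauses whose chosen literal in $T^*$ is $x_i$, and $b_i$ the number whose chosen literal is $\neg x_i$; since every clause contributes exactly one chosen literal, $\sum_i (a_i + b_i) = m$. By Item~(1) of Claim~\ref{clm:prooftree} the proof tree polynomial is $\prod_i x_i^{a_i}(1-x_i)^{b_i}$, so that $\optrustval(T^*) = \max_{\pi}\prod_i \pi(x_i)^{a_i}(1-\pi(x_i))^{b_i}$.

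First I would establish a pointwise bound on each factor: for all $x \in [0,1]$ and nonnegative integers $a \ge b$, we have $x^a(1-x)^b = x^{a-b}\,[x(1-x)]^b \le (1/4)^b$, since $x^{a-b} \le 1$ and $x(1-x) \le 1/4$. Hence $x^a(1-x)^b \le (1/4)^{\min(a,b)}$ for every $x$, and therefore
\[
\optrustval(\varphi) = \optrustval(T^*) \le \prod_{i} (1/4)^{\min(a_i,b_i)} = (1/4)^{\sum_i \min(a_i,b_i)}.
\]
(Alternatively one could start from the exact maximum $\left(\tfrac{a}{a+b}\right)^a\left(\tfrac{b}{a+b}\right)^b$ of Proposition~\ref{prop:basic} and bound it directly, but the pointwise estimate is cleaner and avoids the case analysis.)

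The crux is to show $\sum_i \min(a_i,b_i) \ge m-r$, equivalently $\sum_i \max(a_i,b_i) \le r$. For this I would round $T^*$ to a Boolean assignment $\sigma$ by setting $x_i$ to true when $a_i \ge b_i$ and to false otherwise. Partition the clauses according to the variable of their chosen literal: for variable $x_i$, the $\max(a_i,b_i)$ clauses that chose the majority literal are all satisfied by $\sigma$, and these clause sets are pairwise disjoint across $i$ (each clause chooses a single literal, belonging to a single variable). Thus $\sigma$ satisfies at least $\sum_i \max(a_i,b_i)$ clauses. Since no assignment satisfies more than $r$ clauses, $\sum_i \max(a_i,b_i) \le r$, whence $\sum_i \min(a_i,b_i) = m - \sum_i \max(a_i,b_i) \ge m-r$. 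Combining with the displayed inequality and using $1/4 < 1$ gives $\optrustval(\varphi) \le (1/4)^{m-r}$, as required.

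I expect the combinatorial rounding step to be the main obstacle: it is the only place where the semantic quantity $r$ enters, and the argument must carefully invoke the disjointness of the per-variable clause sets so as not to double-count clauses when lower-bounding the number satisfied by $\sigma$.
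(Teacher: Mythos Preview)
Your proof is correct, and it is genuinely different from the paper's argument. The paper does not invoke the proof-tree machinery of Claims~\ref{clm:prooftree} and~\ref{clm:maxtree} at all; instead it fixes an optimal interpretation $\pi^*$, classifies clauses into \emph{low}, \emph{high}, and \emph{neutral} according to whether $\pi^*(C)$ is below, above, or equal to $1/2$, and then proves two structural claims: a rounding argument showing $H + N/2 \le r$ (Claim~\ref{clm:NeutralHigh}), and a local-optimality argument showing that for every literal $\ell$ the number of low clauses with maximizing literal $\ell$ is at most the number of high clauses with maximizing literal $\neg\ell$ (Claim~\ref{clm:highlow}). Combining these with a bookkeeping claim (Claim~\ref{clm:bound}) yields $\trust(\varphi,\pi^*) \le 1/4^{L+N/2} \le 1/4^{m-r}$.

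Your route is considerably more economical: by passing through an optimal proof tree you reduce the problem to bounding $\prod_i x_i^{a_i}(1-x_i)^{b_i}$, and the single pointwise estimate $x^a(1-x)^b \le (1/4)^{\min(a,b)}$ replaces the low/high/neutral trichotomy and the local-optimality claim entirely. Your rounding step (set $x_i$ true iff $a_i \ge b_i$) is essentially the same idea as Claim~\ref{clm:NeutralHigh}, but cleaner because the disjointness of the per-variable clause sets is immediate from the proof-tree structure, whereas the paper must argue consistency of its assignment across the three rules it uses. The paper's approach has the minor advantage of being self-contained within Section~\ref{subsec:maxsat}, but yours makes better use of the infrastructure already built in Section~\ref{subsec:general} and is shorter overall.
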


\begin{proof}

Let $\pi^*$ be an optimal $\mathbb{V}$-interpretation for $\varphi$.
A clause  $C$ is called {\em low-clause} if $ \pi^*(C) < 1/2$,  $C$ is called a {\em high-clause} of $\pi^*(C) > 1/2$, and  $C$ is a {\em neutral-clause} if $\pi^*(C) = 1/2$. Let $L$, $H$, and $N$ respectively denote the number of low, high, and neutral clauses.

We start with the following claim that relates the number of neutral clauses and the number of high-clauses to $r$.

\begin{claim}\label{clm:NeutralHigh}
$\frac{N}{2}+ H \leq r$
\end{claim}

\begin{proof}
Suppose that the number of low-clauses is strictly less than $m-r$,  thus number of high-clauses is more than $r$. 

For a variable $x$, let
\[p_x = |\{C~|~\mbox{$C$ is neutral and } \ell_C = x\}|\]
and
\[q_x =|\{C~|~\mbox{$C$ is neutral and } \ell_C = \neg{x}\}|\]

That is $p_x$ is the number of neutral clauses for which $x$ is the maximizing literal and $q_x$ is the number of neutral clauses for which $\neg x$ is the maximizing literal. 

Consider the  truth assignment that is constructed  based on the following three rules:  (1) For every high-clause $C$, set $\ell_C$ to True and  $\neg{\ell_C}$ to False, 2) For every variable $x$, if one of $p_{x}$ or $q_x$ is not zero, then if $p_x \geq q_x$, then set $x$ to True, otherwise set $x$ to False. (3) All remaining variables are consistently assigned arbitrary to True/False values.

We argue that this is a consistent assignment: I.e, for every literal $\ell$, both $\ell$ and $\neg{\ell}$ are not assigned the same truth value.
Consider a literal $\ell$. If there is a high clause $C$ such that $\ell = \ell_C$, then this literal is assigned truth value True and $\neg{\ell}$ is assigned False. In this case, since $\pi^*(\ell) > 1/2$,  $\pi^*(\neg{\ell}) < 1/2$. Thus $\neg{\ell}$ can not be maximizing literal for any high-clause and thus Rule (1) does not assign True to $\neg{\ell}$. Again, since $\pi^*(\ell) > 1/2$,  there is no neutral-clause  $D$ such that $\ell = \ell_D$ or $\neg{\ell} = \ell_D$.  Thus Rule (2) does not assign a truth value to either of $\ell$ or $\neg{\ell}$.  Since $\ell$ and $\neg{\ell}$ are assigned truth values, Rule (3) does not assign a truth value to $\ell$ or $\neg{\ell}$. 

Consider a variable $x$ where  at least one of $p_x$ or $q_x$ is not zero. In this case $x$ or $\neg{x}$ is maximizing literal for a neutral clause. Thus $\pi^*(x) = \pi^*(\neg{x}) = 1/2$ and neither $x$ nor ${\neg{x}}$ is maximizing literal for a high-clause. Thus Rule (1) does not assign a truth value to  $x$ or ${\neg{x}}$.  Now $x$ is True if and only if $p_x \geq q_x$, thus the truth value assigned to $x$ (and $\neg{x}$) is consistent. Since Rule (3) consistently assigns truth values of literals that are not covered by Rules (1) and (2), the constructed assignment is a consistent assignment.

For every high clause $C$, literal $\ell_C$ is set to true. Thus the assignment satisfies all the high-clauses. Consider a variable $x$ and let $\mathcal{D}$ be the (non-empty) collection of neutral clauses for which either $x$ or $\neg{x}$ is a maximizing literal. As $x$ is assigned True if and only if $p_x \geq q_x$, at least half the clauses from $\mathcal{D}$ are satisfied.  Thus this assignment satisfies at least $H + \frac{N}{2}$ clauses. Since $r$ is the maximum number of satisfiable clauses, the claim follows.
\end{proof}

For a literal $\ell$, let $a_\ell$ be the number of low-clauses $C$ for which $\ell$ is a maximizing literal, i.e,
\[a_{\ell} = |\{C~|~\mbox{$C$ is a low-clause and } \ell_C = \ell\}|,\]
and
\[b_{\ell} = |\{C~|~\mbox{$C$ is a high-clause and } \ell_C = \neg{\ell}\}|,\]

We show the following relation between $a_\ell$ and $b_\ell$. 

\begin{claim}\label{clm:highlow}
For every literal $\ell$, $a_\ell \leq b_\ell$.
\end{claim}

\begin{proof}

\begin{equation}\label{eqn:subformula}
\trust(\varphi, \pi) = \Pi_{i} \trust(\varphi_{\mid x_i}, \pi) 
\end{equation}

Now suppose that $a_\ell > b_\ell$ for some literal $\ell$. Let $x_j$ be the variable corresponding to the literal $\ell$.
Note that 
\[\trust(\varphi_{\mid x_j}, \pi^*) = \pi^*(\ell)^{a_\ell} \times (1 - \pi^*(\ell))^{b_\ell}\]

where $\pi(\ell) < 1/2$. Consider a new interpretation $\pi'$ where $\pi'(\ell) = 1 - \pi^*(\ell)$, and for all other literals the value of $\pi'$ is the same as the value of $\pi^*$. Now

\begin{eqnarray*}
{\trust(\varphi_{\mid x_j}, \pi') \over \trust(\varphi_{\mid x_j}, \pi^*) } & = &{\pi'(\ell)^{a_\ell} \times (1- \pi'(\ell))^{b_{\ell}} \over \pi(\ell)^{a_\ell} \times (1 - \pi(\ell))^{b_\ell}} \\
& = & {(1-\pi(\ell))^{a_\ell} \times \pi(\ell)^{b_\ell} \over \pi(\ell)^{a_\ell} \times (1 - \pi(\ell))^{b_\ell}}\\
& = &\left( {(1- \pi(\ell) \over \pi(\ell)} \right)^{a_\ell - b_\ell}
 >  1
\end{eqnarray*}

The  last inequality follows because $\pi(\ell) < 1/2$ and the assumption that $a_\ell > b_\ell$.  Since $\trust(\varphi_{\mid x}, \pi^*) = \trust(\varphi_{\mid x}, \pi')$ for every $x \neq x_j$, combining the above inequality with Equation~\ref{eqn:subformula}, we obtain that $\trust(\varphi, \pi') > \trust(\varphi, \pi^*)$ and thus $\pi^*$ is not an optimal $\mathbb{V}$-interpretation.  This is a contradiction. Thus $a_\ell \leq b_\ell$
\end{proof}

We next bound the contribution of neutral and low clauses to $\optrustval(\varphi)$. For every neutral clause $C$, $\pi^*(C) = 1/2$, thus we have the following observation.

\begin{observation}\label{obs:neutral}
The contribution of neutral clauses to $\trust(\varphi, \pi^*)$ is exactly $1/2^N$. 
\end{observation}

We establish the following claim. 

\begin{claim}\label{clm:bound}
\[\trust(\varphi, \pi^*) =\prod_{\ell} \left (\pi^*(\ell)^{a_\ell} \times  (1- \pi^*(\ell))^{b_\ell}\right) \times \frac{1}{2^N}\]

\end{claim}

\begin{proof}
By Observation~\ref{obs:neutral}, the contribution of neutral clauses to $\trust(\varphi, \pi^*)$ is $1/2^N$. Next we show that the contribution of all high and low clauses is exactly. 
\[ \prod_{\ell} \pi^*(\ell)^{a_{\ell}} \times (1-\pi^*(\ell))^{b_{\ell}}.\]

For this we first claim that exactly one of $\ell$ or $\neg{\ell}$ contribute to the above product.  For this it suffices to prove that for every literal $\ell$ exactly one of $a_{\ell}$ ($b_{\ell}$ resp.) or $a_{\neg{\ell}}$ ($b_{\neg{\ell}}$) is zero. Suppose $a_\ell \neq 0$, in this case $\neg{\ell}$ can not be maximizing literal for any low clause. Thus $a_{\neg{\ell}} = 0$. Suppose that $b_{\ell} \neq 0$, then $\neg{\ell}$ is a maximizing literal for a high clause and thus $\pi^*(\neg{\ell}) > 1/2$, and $\pi^*(\ell) \leq 1/2$. If $b_{\neg{\ell}} \neq 0$, then $\ell$ must be a maximizing literal for a high-clause, and this is not possible as $\pi^*(\ell) \leq 1/2$. Thus $b_{\neg{\ell}} = 0$.

Let $Z$ be the collection of literals $\ell$ for which $a_{\ell} > 0$. Now that quantity $\prod_{\ell \in Z} \pi^*(\ell)^{a_{\ell}} \times (1- \pi^*(\ell))^{b_{\ell}}$ captures the contribution of all low clauses and $\sum_{\ell \in Z}$ many high-clauses. For all remaining high-clauses, there exist a literal $\ell$ such that $\ell \notin Z$ and $b_{\ell} \neq 0$. The contribution of all the remaining high- clauses is $\prod_{\ell \notin Z} (1- \pi(\ell))^{b_{\ell}}$. This quantity equals $\prod_{\ell \notin Z} \pi^*(\ell)^{a_{\ell)}} \times (1-\pi(\ell))^{b_{\ell}}$ as $a_{\ell} = 0$ for $\ell \notin Z$.
\end{proof}

Finally, we are ready to complete the proof of Theorem~\ref{thm:OptTrustBound}. For every literal $\ell$, By Claim~\ref{clm:highlow}, $a_\ell \leq b_\ell$. Let $b_{\ell}  =a_{\ell} + c_{\ell}$, $c_\ell \geq 0$.
Consider the following
inequalities.
\begin{eqnarray*}
\optrustval(\varphi) & = & \trust(\varphi, \pi^*)\\
& = & \prod_{\ell} \left(\pi^*(\ell)^{a_{\ell}} \times (1-\pi^*(\ell))^{b_{\ell}} \right)\times \frac{1}{2^N} \\
& =& \prod_{\ell} \left(\pi^*(\ell)^{a_\ell} \times (1-\pi^*(\ell))^{a_{\ell}+c_{\ell}}\right) \times \frac{1}{2^N}\\
& \leq & \prod_{\ell} \left(\pi^*(\ell)^{a_\ell} \times (1-\pi^*(\ell))^{a_{\ell}} \right) \times \frac{1}{2^N}\\
& \leq & \prod_{\ell} \left(\frac{1}{4^{a_{\ell}}}\right) \times \frac{1}{2^N} = \frac{1}{4^{L+N/2}} \\
\end{eqnarray*}
In the above, equality at line 2 is due to Claim~\ref{clm:bound}. The  inequality at line 4 follows because $(1- \pi^*(\ell)) \leq 1$. The last inequality follows because $x(1-x)$ is maximized at $x = 1/2$. The last equality follows  as $\sum a_\ell = L$.
Note that the number of clauses $m = N + H + L$ and by Claim~\ref{clm:NeutralHigh} $H+N/2 \leq r$. 
It follows that $L + N/2 \geq m-r$. Thus $\optrustval(\varphi) = \trust(\varphi,\pi^*) \leq {1\over 4^{L+N/2}} \leq {1\over 4^{m-r}}$.
\end{proof}

\subsection{$\FP^{\NP[\log]}$- Hardness}\label{sec:hardness}

In this subsection, we show that $\optrustval$ is hard for the class ${\rm FP}^{\NP[\log]}$. We show this by reducing \maxsatval\ to \optrustval. Since \maxsatval\ is complete for ${\rm FP}^{\NP[\log]}$, the result follows. We also show that the same reduction can be used to compute a \maxsat\ assignment from an optimal $\mathbb{V}$-interpretation. 

\begin{theorem}\label{thm:hardness}
\maxsatval\ metric reduces to \optrustval\  for CNF formulae. Hence $\optrustval$ is hard for ${\mathrm{FP}}^{\mathrm{NP}[\log]}$ for CNF formulae.
\end{theorem}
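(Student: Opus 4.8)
The plan is to exhibit an explicit metric reduction from \maxsatval\ to \optrustval, and then invoke Krentel's completeness of \maxsatval\ together with the composability of metric reductions to conclude hardness. The heart of the reduction is a gadget that makes the inequality of Theorem~\ref{thm:OptTrustBound} \emph{tight}, so that the single number $\optrustval(h_1(\varphi))$ encodes the max-sat count $r$ exactly. Concretely, given a CNF formula $\varphi = C_1 \wedge \cdots \wedge C_m$ over $x_1,\dots,x_n$, I would define $h_1(\varphi) = \varphi'$ by introducing a fresh variable $w_i$ for each clause and replacing $C_i$ with the two clauses $(C_i \vee w_i)$ and $(\neg w_i)$; thus $\varphi'$ is a CNF formula with $m' = 2m$ clauses over $\{x_j\}\cup\{w_i\}$.

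First I would compute $\maxsatval(\varphi')$. For any assignment, $(\neg w_i)$ is satisfied exactly when $w_i$ is false, while $(C_i \vee w_i)$ is satisfied whenever $C_i$ is true or $w_i$ is true. Fixing the $x$-variables: if $C_i$ is made true one completes with $w_i$ false and satisfies both gadget clauses (contributing $2$), whereas if $C_i$ is false then exactly one of the two clauses is satisfied regardless of $w_i$ (contributing $1$). Hence an assignment satisfying $s$ of the original clauses satisfies $m + s$ clauses of $\varphi'$, and maximizing gives $\maxsatval(\varphi') = m + r$, where $r = \maxsatval(\varphi)$. In particular $m' - \maxsatval(\varphi') = 2m - (m+r) = m - r$.

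Next I would pin down $\optrustval(\varphi')$ exactly. For the upper bound, Theorem~\ref{thm:OptTrustBound} applied to $\varphi'$ gives $\optrustval(\varphi') \le 1/4^{\,m' - \maxsatval(\varphi')} = 1/4^{\,m-r}$. For the matching lower bound I would exhibit an interpretation: take a Boolean assignment $\sigma$ satisfying $r$ clauses of $\varphi$, set $\pi(x_j) = \sigma(x_j) \in \{0,1\}$, and set $\pi(w_i) = 0$ when $C_i$ is satisfied and $\pi(w_i) = \tfrac12$ when $C_i$ is falsified. Since each $w_i$ is fresh, the value factorizes over gadgets: a satisfied clause yields $\max(1,0)\cdot(1-0) = 1$ and a falsified clause yields $\max(0,\tfrac12)\cdot(1-\tfrac12) = \tfrac14$, so $\trust(\varphi',\pi) = 1/4^{\,m-r}$. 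Combining the two bounds gives $\optrustval(\varphi') = 1/4^{\,m-r}$.

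Finally I would set the recovery function $h_2(\varphi, v) = m + \tfrac12 \log_2 v$; since $v = \optrustval(\varphi') = 1/4^{\,m-r}$, this returns $m - (m-r) = r = \maxsatval(\varphi)$, and $h_2$ runs in polynomial time because $v$ is a power of $\tfrac14$ and $m$ is read off $\varphi$. As both $h_1$ and $h_2$ are polynomial-time, this is a valid metric reduction; since \maxsatval\ is ${\rm FP}^{\NP[\log]}$-complete under metric reductions and metric reductions compose, $\optrustval$ for CNF formulae is ${\rm FP}^{\NP[\log]}$-hard. The main obstacle is precisely upgrading the \emph{inequality} of Theorem~\ref{thm:OptTrustBound} to an \emph{equality}: for a raw formula $\optrustval(\varphi)$ can lie far below $1/4^{\,m-r}$ (e.g. for $x \wedge \neg x \wedge \cdots \wedge \neg x$, where a single shared variable forces strong interference), so $r$ cannot be read off $\optrustval(\varphi)$ directly. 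The fresh variables $w_i$ decouple the clauses just enough that each falsified clause can be independently driven to value exactly $\tfrac14$, which is what forces tightness and lets $h_2$ invert the value cleanly.
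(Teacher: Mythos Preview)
Your proposal is correct and essentially identical to the paper's own proof: the same gadget $(C_i \vee w_i) \wedge (\neg w_i)$ with fresh variables (the paper calls them $y_i$), the same computation that $\maxsatval(\varphi') = m+r$, the same appeal to Theorem~\ref{thm:OptTrustBound} for the upper bound, and the same interpretation $\pi(w_i)\in\{0,\tfrac12\}$ for the matching lower bound. Your explicit formula for $h_2$ and the illustrative counterexample explaining why the gadget is needed are nice additions not spelled out in the paper.
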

\begin{proof}
Let $\varphi(x_i,\ldots,x_n) = C_1 \wedge \ldots \wedge C_m$ be a formula with $m$
 clauses on variables $x_1,\ldots,x_n$. Consider the formula $\varphi'$ with $m$ additional variables $y_1,\ldots,y_m$ constructed as follows: For each clause $C_i$  of $\varphi$, add the clause $C'_i = C_i\vee y_i$ in $\varphi'$. Also add  $m$ unit clauses $\neg y_i$. That is
 \[
 \varphi' = (C_1 \vee y_1)\wedge \ldots \wedge (C_m \vee y_m) \wedge \neg y_1 \wedge \cdots \wedge \neg y_m
 \]
 
 \begin{claim}\label{clm:earlierargument}
  $\optrustval(\varphi') = \frac{1}{4^{m-r}}$ where $r$ is the maximum number of clauses that can be satisfied in $\varphi$. 
 \end{claim} 
 \begin{proof} 
 We show this claim by first showing that $\optrustval(\varphi') \leq \frac{1}{4^{m-r}}$ and exhibiting an interpretation $\pi^*$ so that $\trust(\varphi,\pi^*) = \frac{1}{4^{m-r}}$. 
 We claim that if  $r$ is the maximum number of clauses that can be satisfied in $\varphi$, then $m+r$ is the maximum number of clauses that can be satisfied in $\varphi'$. We will argue this by contradiction. Let $\mathbf{a}$  be an assignment that satisfies  $> m+r$ clause in $\varphi'$. Let $s$ be the number of $y_i$s that are set to False. This assignment will satisfy  $m-s$ clauses of the form $C_i \vee y_i$. However the total number of clauses of the form $C_i \vee y_i$ that are satisfied is $> m+r -s$. Thus there are $> r$ clauses of the form $C_i \vee y_i$ that are satisfied where $y_i$ is set to False. This assignment when restricted to $x_i$s will satisfy more than $r$ clauses of $\varphi$. Hence the contradiction.    
 
 Thus from Theorem~\ref{thm:OptTrustBound}, it follows that $\optrustval(\varphi') \leq \frac{1}{4^{m-r}}$.   Now we exhibit an interpretation $\pi^*$ so that $\trust(\varphi, \pi^*) = \frac{1}{4^{m-r}}$.  Consider an assignment $\mathbf{a} = a_1,\ldots, a_n$ for $\varphi$ that satisfies $r$ clauses.  Consider the following interpretation $\pi^*$ over the variable of $\varphi'$: $\pi^*(x_i) = 1$ if $a_i = {\rm True}$ and $\pi^*(x_i) = 0$ if $a_i = {\rm False}$. $\pi^*(y_i) = 0$ if and only if  $C_i$ is satisfied by $\mathbf{a}$. Else $\pi^*(y_i) = 1/2$.   
For every satisfiable clause $C_i$, $\trust(C_i\vee y_i, \pi^*) = 1$ and $\trust(\neg y_i,\pi^*) = 1$. For all other clauses $C$ in $\varphi'$, $\trust(C,\pi^*) = 1/2$. Since there are $r$ clauses that are satisfied, the number of clauses for which $\trust(C, \pi^*) = 1/2$ is $2m-2r$. Hence the $\trust(\varphi',\pi^*) = \frac{1}{4^{(m-r)}}$.  Thus $\optrustval(\varphi') = \frac{1}{4^{m-r}}$. 
\end{proof}
Since $\optrustval(\varphi') = 1/4^{m-r}$,  $\maxsatval$ for $\varphi$ can be computed by knowing the $\optrustval$.
\end{proof}

While the above theorem shows that $\maxsatval$ can be computed from $\optrustval$, the next theorem shows that a maxsat assignment can be computed from an optimal $\mathbb{V}$-interpretation.
\begin{theorem}
$\maxsat$ metric reduces to $\optrust$.
\end{theorem}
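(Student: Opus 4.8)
The plan is to reuse the reduction function from Theorem~\ref{thm:hardness}: set $h_1(\varphi) = \varphi'$, where $\varphi'$ is the padded formula $(C_1 \vee y_1)\wedge \cdots \wedge (C_m \vee y_m) \wedge \neg y_1 \wedge \cdots \wedge \neg y_m$. The oracle call $\optrust(\varphi')$ returns an optimal $\mathbb{V}$-interpretation $\pi^*$ of $\varphi'$, given explicitly as rationals with polynomially many bits (Theorem~\ref{thm:pnpcompute}). The recovery function $h_2(\varphi,\pi^*)$ must then, in polynomial time, turn $\pi^*$ into a truth assignment of $x_1,\ldots,x_n$ that satisfies $r$ clauses of $\varphi$, where $r=\maxsatval(\varphi)$.

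For the recovery I would run on $\pi^*$ (viewed as an optimal interpretation of $\varphi'$) exactly the rounding procedure constructed in the proof of Claim~\ref{clm:NeutralHigh}: classify each clause of $\varphi'$ as low, high, or neutral according to $\pi^*$, and build a consistent truth assignment $\mathbf{b}$ over $\{x_i\}\cup\{y_i\}$ by Rules (1)--(3) there. That construction is polynomial-time computable from $\pi^*$ and, as shown in that proof, yields an assignment satisfying at least $H'+N'/2$ clauses of $\varphi'$, where $H'$ and $N'$ denote the numbers of high and neutral clauses of $\varphi'$ under $\pi^*$.

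The crux is to show that $H'+N'/2$ equals $m+r$, the MaxSAT value of $\varphi'$ (established in Theorem~\ref{thm:hardness}). Here I would exploit tightness of the bound in Theorem~\ref{thm:OptTrustBound}. That proof shows $\optrustval(\varphi') \le 1/4^{L'+N'/2}$ together with $L'+N'/2 \ge m'-r'$, where $m'=2m$ and $r'=m+r$; since Claim~\ref{clm:earlierargument} gives $\optrustval(\varphi')=1/4^{m-r}=1/4^{m'-r'}$ exactly, both inequalities collapse to equalities, so $L'+N'/2 = m-r$. As $\varphi'$ has $2m$ clauses, $H' = 2m - L' - N'$, whence $H'+N'/2 = 2m - (L'+N'/2) = m+r$. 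Thus $\mathbf{b}$ satisfies at least $m+r$ clauses of $\varphi'$, and since $m+r$ is the maximum, exactly $m+r$.

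Finally I would restrict $\mathbf{b}$ to the $x$-variables and output $\mathbf{b}|_x$ as $h_2(\varphi,\pi^*)$. Repeating the counting argument of Theorem~\ref{thm:hardness}: if $s$ of the $y_i$ are set False, then $\mathbf{b}$ satisfies $s$ of the unit clauses $\neg y_i$, all $m-s$ clauses $C_i\vee y_i$ with $y_i$ True, and exactly those $C_i\vee y_i$ with $y_i$ False whose $C_i$ is satisfied by $\mathbf{b}|_x$; equating the total to $m+r$ forces $\mathbf{b}|_x$ to satisfy at least $r$ clauses of $\varphi$, hence exactly $r$ by maximality. So $\mathbf{b}|_x$ is a MaxSAT assignment, and $h_1,h_2$ form a metric reduction from $\maxsat$ to $\optrust$. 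The main obstacle is the tightness step, since the rounding procedure alone only guarantees $\ge H'+N'/2$ satisfied clauses; pinning $H'+N'/2$ to the exact MaxSAT value of $\varphi'$ through the equality case of Theorem~\ref{thm:OptTrustBound} is what makes $\mathbf{b}|_x$ optimal for $\varphi$.
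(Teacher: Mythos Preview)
Your argument is correct and takes a genuinely different route from the paper. The paper uses the same reduction $\varphi\mapsto\varphi'$, but then proves a sequence of structural claims specific to $\varphi'$ about any optimal interpretation $\pi$: that $\pi(y_i)\in\{0,1/2\}$, that $\pi(x_i)\in\{0,1/2,1\}$ whenever $x_i$ or $\neg x_i$ is a maximizing literal, and that $\pi(x_i)=1/2$ forces $x_i$ and $\neg x_i$ to maximize equally many clauses. From these it uses the simple rounding ``$x_i\mapsto$ False iff $\pi(x_i)=0$'' and a direct count to get exactly $r$ satisfied clauses. Your approach instead reuses the generic rounding from the proof of Claim~\ref{clm:NeutralHigh} and observes that, because $\varphi'$ attains the bound of Theorem~\ref{thm:OptTrustBound} with equality, both inequalities in that proof ($\trust(\varphi',\pi^*)\le 1/4^{L'+N'/2}$ and $L'+N'/2\ge m'-r'$) must be tight, forcing $H'+N'/2=m+r$. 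This is more modular: it needs no new structural analysis of $\pi^*$ on $\varphi'$ and shows that the rounding of Claim~\ref{clm:NeutralHigh} is itself a MaxSAT-optimal assignment whenever the $1/4^{m-r}$ bound is met exactly. The paper's approach, on the other hand, yields finer information about what optimal interpretations of $\varphi'$ actually look like.
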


\begin{proof}Consider the same reduction as from the previous theorem. Our task is to construct a \maxsat\ assignment for $\varphi$, given an optimal $\mathbb{V}$-interpretation $\pi$ for $\varphi'$. By the earlier theorem, $\trust(\varphi', \pi) = \frac{1}{4^{m-r}}$, where $r$ is the maximum number of satisfiable clauses of $\varphi$. 

We next establish a series of claims on the values takes by $\pi(y_i)$ and $\pi(x_i)$.

\begin{claim}\label{claim:zero-half}
For all $y_i$; $\pi(y_i) \in \{0,1/2\}$.
\end{claim}

\begin{proof}
Consider a clause $C_i' = (C_i \vee y_i)$ for which $\ell_{C'_i} = y_i$. Now the contribution of $C'_i$ and the clause $\neg{y_i}$ to $\trust(\varphi', \pi)$ is 
$ \pi(y_i) \times (1- \pi(y_i))$.
Since there is no clause $C'_j$ for which $\ell_{C'_j} = y_i$, the above value is maximized when $\pi(y_i) = 1/2$.
Now consider a clause $C'_j = (C_j \vee y_j)$, for which $\ell_{C'_j} \neq y_j$. Contribution of $C'_j$ and the clause $\neg{y_j}$ to $\trust(\varphi', \pi)$ is 
$\pi(\ell_{C'_j}) \times \pi(\neg{y_j)}$.
Since, $\ell_{C'_j} \neq y_j$, and there is no other clause in which $y_j$ or $\neg{y_j}$ appear, the above expression is maximized when $\pi(\neg{y_j}) = 1$ and thus $\pi(y_j) = 0$.
\end{proof}

\begin{claim}\label{clm:onlyzero}
For every $i$, if $y_i$ is not maximizing literal for clause $C'_i$, then $\pi(y_i) = 0$. 
\end{claim}

\begin{proof}
Let $C'_i$ be a clause for which $y_i$ is not maximizing literal. Say $\ell_j $ is the maximizing literal. We first consider the case $\pi(\ell_j) < 1/2$. By previous claim, $\pi(y_i) \in \{0, 1/2\}$, and if $\pi(y_i) = 1/2$, then $\ell_j$ can not be maximizing literal for clause $C'_i$. Thus $\pi(y_i) = 0$. Now consider the case $\pi(\ell_j) \geq 1/2$. Suppose that $\pi(y_i) = 1/2$. Now the contribution of the clauses $C'_i$ and $\neg{y_i}$ to $\trust(\varphi, \pi)$ is $\pi(\ell_j)/2$. However, if we change $\pi(y_i) = 0$, then the contribution of these clauses would become $\pi(\ell_j)$ and this would contradict the optimality of $\pi$. Thus  by Claim~\ref{claim:zero-half}, $\pi(y_i) = 0$.
\end{proof}

\begin{claim}\label{clm:zeroonehalf}
For all $x_i$, if $x_i$  or $\neg{x_i}$ is a maximizing literal, then  $\pi(x_i) \in \{0, 1, 1/2\}$
\end{claim}

\begin{proof}
We argue for the case when $x_i$ is a maximizing literal. The case when $\neg{x_i}$ is a maximizing literal follows by similar arguments.
Suppose that $x_i$ is a maximizing literal and $\pi(x_i) < 1/2$ and $\pi(x_i)$ is neither 0 nor 1. It must be the case that $\neg{x_i}$ is also a maximizing literal, otherwise making $\pi(x_i) =1 $ will increase the trust value. Suppose $x_i$ is a maximizing literal for $a$ many clauses and $\neg{x_i}$ is a maximizing literal for $b$ many clauses. If $a > b$, then we can obtain a $\mathbb{V}$-interpretation, by swapping  $\pi(x_i)$ with $\pi(\neg{x_i})$. If $a$ equals $b$, then $\pi(x_i)$ must be equal to $1/2$ as $x^a (1-x)^a$ is maximized for $x = 1/2$. Thus $a < b$. 
For every clause $C'_j$ for which $x_i$  or $\neg{x_j}$ is the maximizing literal, it must be the case that $\pi(y_j) = 0$, by  Claim~\ref{clm:onlyzero}. Let $\mathcal{C}$ be the collection of all clauses $C'_j$ together with $\neg{y_j}$, where either $x_i$ or $\neg{x_i}$ is maximizing literal.  The contribution of these clauses to $\trust(\varphi, \pi)$ is $\pi(x_i)^a \times (1- \pi(x_i))^b \times 1^{a+b}$.

We now construct a new $\mathbb{V}$-interpretation $\pi'$ that will contradict the optimality of $\pi$.
For every clause $C'_j \in \mathcal{C}$ in which $x_i$ is the maximizing literal, $\pi'(y_i) = 1/2$ and $\pi'(x_i) = 0$. Now the contribution of clauses from $\mathcal{C}$ to $\trust(\varphi, \pi')$ is
$(\frac{1}{2})^a \times 1^b \times (\frac{1}{2})^a \times 1^b$

Since $x^a(1-x)^b < 1/4^a$ (when $a < b$), 
\[(\frac{1}{2})^a \times 1^b \times (\frac{1}{2})^a \times 1^b > \pi(x_i)^a \times (1- \pi(x_i))^b \times 1^{a+b}\]
Thus $\trust(\varphi, \pi') > \trust(\varphi, \pi)$ which is a contradiction. Thus if $\pi(x_i) < 1/2$, then $\pi(x_i) = 0$, a similar argument shows that if $\pi(x_i) > 1/2$, then $\pi(x_i) = 1$.
\end{proof}

\begin{claim}\label{clm:equal}
For every $x_i$ with $\pi(x_i) = 1/2$, $x_i$ and $\neg{x_i}$ are maximizing literals for exactly the same number of clauses.
\end{claim}

\begin{proof}
Let $\mathcal{C}$ be the collection of clauses for which either $x_i$ or $\neg{x_i}$ is maximizing literal. Suppose that $x_i$ is maximizing literal for $a$ clauses and $\neg{x_i}$ is maximizing literal for $b$ clauses. If $a \neq b$, $\pi(x_i) = \frac{a}{a+b} \notin \{0, 1, 1/2\}$ and this contradicts Claim~\ref{clm:zeroonehalf}. 
\end{proof}

We will show how to construct a \maxsat\ assignment from $\pi$: If $\pi(x_i) = 0$, set  the truth value of $x_i$ to False, else set the truth value of $x_i$ to True. 

By Claim~\ref{clm:zeroonehalf}, $\pi(x_i) = \{0, 1/2, 1\}$. Let $H$ be the number of clauses for which  maximizing literal $\ell$ is a $x$-variable and $\pi(\ell) = 1$. Note that the above truth assignment will satisfy all the $H$ clauses.  Let $N$ be number of clauses for which maximizing literal $\ell$ is a $x$-variable and $\pi(\ell) = 1/2$. By Claim~\ref{clm:equal}, in exactly $N/2$ clauses a positive literal is maximizing, and thus all these $N/2$ clauses are satisfied by our truth assignment. Thus the total number of clauses satisfied by the truth assignment is $N/2+H$. Let $Y$ the number of clauses in which $y_i$ is maximizing literal. By Claim~\ref{claim:zero-half}, $\pi(y_i) = 1/2$ when $y_i$ is maximizing literal. Thus 
\[\trust(\varphi', \pi) =   1^H \times (\frac{1}{2})^N \times (\frac{1}{2})^{2Y}
                  =  \frac{1}{4^{N/2+Y}}
                 =  \frac{1}{4^{m-r}}\]
The last equality follows from Claim~\ref{clm:earlierargument}. Thus $m-r = N/2+Y$, combining this with $m = H+N+Y$, we obtain that $N/2+H = r$. Thus the truth assignment constructed will satisfy $r$ clauses and is thus a $\maxsat$ assignment. 
 \end{proof}

\section{Approximating \optrustval}\label{sec:approxtrust}

We study the problem of approximating $\optrustval$ efficiently. Below, a $k$-SAT formula is a CNF formula with {\em exactly} $k$ distinct variables in any clause. We start with the following proposition.


\begin{lemma}\label{lem:approx1}
Let $a_1, \cdots a_n$ be an assignment, that satisfies $r$ clauses of a CNF formula $\varphi(x_1, \cdots x_n)$. There is an interpretation $\pi$ so that $\trust(\varphi,\pi)$ is 
$\left(\frac{m-r}{m}\right)^{m-r} \left(\frac{r}{m}\right)^r$
\end{lemma}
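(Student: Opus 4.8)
The plan is to take the given Boolean assignment $\mathbf{a}=a_1,\dots,a_n$ and ``soften'' it into a $\mathbb{V}$-interpretation that is uniform across satisfied and unsatisfied clauses. Concretely, I would fix the parameter $t=r/m$ and define $\pi(x_i)=t$ whenever $a_i=\mathrm{True}$ and $\pi(x_i)=1-t$ whenever $a_i=\mathrm{False}$. The first thing to record is the effect of $\pi$ on literals: since $\negation(x)=1-x$, every literal that is \emph{true} under $\mathbf{a}$ receives value $t$ under $\pi$, and every literal that is \emph{false} under $\mathbf{a}$ receives value $1-t$. This single structural fact is what the whole argument rests on.

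Next I would evaluate $\trust$ clause by clause. Because $\varphi$ is in CNF, $\trust(\varphi,\pi)=\prod_{i=1}^{m}\trust(C_i,\pi)$, and over the Viterbi semiring $\trust(C_i,\pi)=\max_{\ell\in C_i}\pi(\ell)$. For a clause satisfied by $\mathbf{a}$ there is at least one true literal (value $t$), while every other literal takes value $t$ or $1-t$; for a clause unsatisfied by $\mathbf{a}$ all literals are false and hence take value $1-t$. Thus each unsatisfied clause contributes exactly $1-t=(m-r)/m$, each of the $r$ satisfied clauses contributes $t=r/m$, and multiplying gives
\[
\trust(\varphi,\pi)=t^{\,r}(1-t)^{\,m-r}=\Big(\tfrac{r}{m}\Big)^{r}\Big(\tfrac{m-r}{m}\Big)^{m-r},
\]
which is the claimed value.

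The one step needing care — and the main obstacle — is the evaluation of a \emph{satisfied} clause: its maximum literal value equals $t$ only if the true literal (value $t$) is at least as large as any false literal (value $1-t$) occurring in the same clause, i.e.\ only when $t\ge 1/2$, equivalently $r\ge m/2$. In the regime where the lemma is applied (the assignment $\mathbf{a}$ comes from a MaxSAT-type procedure, whose optimum always satisfies at least half the clauses) this holds and the equality is exact; in general one still obtains the lower bound $\trust(\varphi,\pi)\ge (r/m)^{r}((m-r)/m)^{m-r}$, since every satisfied clause contributes $\max(t,1-t)\ge t$. I would also note that the choice $t=r/m$ is not arbitrary: by Proposition~\ref{prop:basic} with $a=r$, $b=m-r$, the quantity $t^{r}(1-t)^{m-r}$ is maximized precisely at $t=r/m$, which is why this value is the natural target. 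Beyond verifying this threshold condition, the remainder is elementary bookkeeping over the $m$ clauses.
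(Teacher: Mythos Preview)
Your argument is correct and is essentially the same as the paper's proof: the paper sets $\pi(x_i)=1-\epsilon$ when $a_i=1$ and $\pi(x_i)=\epsilon$ when $a_i=0$, observes that satisfied clauses contribute (at least) $1-\epsilon$ and unsatisfied clauses contribute $\epsilon$, and then optimizes over $\epsilon$ via Proposition~\ref{prop:basic} to get $\epsilon=(m-r)/m$---exactly your construction with $t=1-\epsilon=r/m$. You are in fact more careful than the paper in flagging that the exact equality for satisfied clauses requires $t\ge 1/2$ (equivalently $r\ge m/2$), and that otherwise one only obtains the inequality $\trust(\varphi,\pi)\ge (r/m)^r((m-r)/m)^{m-r}$; the paper glosses over this point.
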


\begin{proof}
If $a_i = 1$, set $\pi(x_i) = (1-\epsilon)$ and if $a_i =0$, then set $\pi(x_i) = \epsilon$. For every clause $C_i$ that is satisfied, we obtain a max value of $(1-\epsilon)$ and for every clause that is not satisfied, the max value is $\geq \epsilon$. Thus the \optrust\ obtained by this assignment is $(1-\epsilon)^r \epsilon^{m-r}$, and this is maximized when $\epsilon = \frac{m-r}{m}$ by Proposition~\ref{prop:basic}.
\end{proof}

Hence, for example, if $\varphi$ is a 3-SAT formula,  since a random assignment satisfies $7/8$  fraction of the clauses in expectation, for a random assignment $r \geq 7m/8$, and by Lemma \ref{lem:approx1}, $\optrustval(\varphi) > 0.686^m$. The following lemma shows that one can get a better lower bound on $\optrustval$ in terms of the clause sizes for CNF formulae.

\begin{lemma}\label{lem:approx2}
For every CNF formula $\varphi$, $\optrustval(\varphi) \geq e^{-\sum_i \frac{1}{k_i}}$ where $k_i$ is the arity of the $i$'th clause in $\varphi$.
\end{lemma}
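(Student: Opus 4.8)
The plan is to use the probabilistic method: I will exhibit a distribution over $\mathbb{V}$-interpretations under which the expected logarithm of the confidence equals exactly $-\sum_i 1/k_i$, and then invoke the averaging principle that $\max_\pi \log \trust(\varphi, \pi) \ge \E_\pi[\log \trust(\varphi, \pi)]$. Concretely, I would draw each $\pi(x_j)$ independently and uniformly from $[0,1]$. Since the uniform distribution is symmetric about $1/2$, the value of a positive literal $\pi(x_j)$ and of a negative literal $1-\pi(x_j)$ are both uniform on $[0,1]$; hence for any clause $C$ whose $k$ literals lie on distinct variables, the literal values are $k$ i.i.d.\ uniform random variables $V_1,\ldots,V_k$, and $\trust(C,\pi)=\max_j V_j$ because we work over the Viterbi semiring.

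The heart of the argument is the computation $\E[\log \max_{1\le j\le k} V_j] = -1/k$. I would establish this by noting that $M=\max_j V_j$ has CDF $\Pr[M\le t]=t^k$ on $[0,1]$, so $M$ is distributed as $U^{1/k}$ for a single uniform $U$; therefore $\E[\log M]=\tfrac1k\,\E[\log U]=\tfrac1k\int_0^1 \ln t\,dt=-\tfrac1k$. (Equivalently one integrates $\int_0^1 k t^{k-1}\ln t\,dt=-1/k$ by parts.) Note this also shows the uniform marginal makes the per-clause bound tight, which is why it is the right choice.

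Because the interpretation is a product distribution and each clause's value depends only on its own distinct variables, linearity of expectation gives
\[
\E_\pi\bigl[\log \trust(\varphi,\pi)\bigr]=\sum_{i=1}^m \E_\pi\bigl[\log \trust(C_i,\pi)\bigr]=-\sum_{i=1}^m \frac{1}{k_i}.
\]
Since a random variable attains a value at least its (finite) expectation with positive probability, there exists an interpretation $\pi^*$ with $\log \trust(\varphi,\pi^*)\ge -\sum_i 1/k_i$, whence $\optrustval(\varphi)\ge \trust(\varphi,\pi^*)\ge e^{-\sum_i 1/k_i}$.

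I do not anticipate a serious obstacle; the only points requiring care are (i) recognizing that the uniform marginal is exactly the distribution that makes the per-clause bound tight, and (ii) the mild technicality that $\log \trust$ is almost surely finite (each $V_j\in(0,1)$ a.s.), so the expectation is well defined and the averaging step is valid. A degenerate clause containing a variable together with its negation is trivially handled, since its confidence can be taken to be $1\ge e^{-1/k_i}$, so it only helps the bound.
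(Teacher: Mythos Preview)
Your proposal is correct and follows essentially the same probabilistic-method argument as the paper: draw each $\pi(x_j)$ uniformly from $[0,1]$, compute $\E[\log\trust(\varphi,\pi)]=-\sum_i 1/k_i$, and conclude by averaging. The differences are purely cosmetic: you evaluate the per-clause expectation via the identity $\max_j V_j\stackrel{d}{=}U^{1/k}$ rather than through the tail formula $-\int_{-\infty}^{0}\Pr[\log\max_\ell\pi(\ell)\le t]\,dt$ used in the paper, and you apply the averaging principle directly to $\log\trust$ instead of first invoking Jensen's inequality to pass to $\E[\trust]$ (a step that is in any case not needed).
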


\begin{proof}
Consider the interpretation $\pi$ that assigns every variable $x_i$ a uniformly chosen value in the interval $[0,1]$. Let the clauses in $\varphi$ be $C_1, \dots, C_m$.
Then:
\begin{align*}
\log \E[\trust(\varphi,\pi)] 
&\geq \E \log \trust(\varphi, \pi) ~(\mbox{Jensen's Inequality})\\
&= \sum_i \E\left[\log \max_{\ell \in C_i} \pi(\ell)\right]\\
&= -\sum_i \int_{-\infty}^0 \Pr\left[\log \max_{\ell \in C_i} \pi(\ell) \leq t\right] dt\\
&= -\sum_i \int_{-\infty}^0 \Pr\left[\max_{\ell \in C_i} \pi(\ell) \leq e^t\right] dt\\
&= -\sum_i \int_{-\infty}^0 e^{k_it} dt = -\sum_i \frac{1}{k_i}
\end{align*}
Hence, there exists a choice of $\pi$ achieving this trust value.
\end{proof}
This yields a probabilistic algorithm. For example, if $\varphi$ is a $3$-SAT formula, $\optrustval(\varphi) > 0.716^m$  and thus improving on the result of Lemma \ref{lem:approx1}. In fact, we can design a deterministic polynomial time algorithm that finds an interpretation achieving the trust value guaranteed by Lemma \ref{lem:approx2}, using the well-known `method of conditional expectation' to derandomize the construction in the proof (For example, see~\cite{AS08,GW94}). 

\begin{theorem}\label{thm:appalgo}
There is a polynomial-time, $e^{-m/k}$-approximation algorithm for $\optrust$, when the input formulas are $k$-CNF formulas with $m$-clauses.
\end{theorem}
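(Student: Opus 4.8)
The plan is to obtain the approximation by \emph{derandomizing} the random interpretation used in the proof of Lemma~\ref{lem:approx2}. First I would observe that for a $k$-CNF formula every clause has arity $k_i=k$, so Lemma~\ref{lem:approx2} specializes to $\optrustval(\varphi)\ge e^{-m/k}$; in fact the computation in that proof establishes the stronger \emph{additive} statement that the uniform random interpretation $\pi$ satisfies $\E[\log\trust(\varphi,\pi)] = \sum_i \E[\log\max_{\ell\in C_i}\pi(\ell)] = -m/k$. Since trust values lie in $[0,1]$ we always have $\optrustval(\varphi)\le 1$, so any interpretation $\pi^*$ with $\trust(\varphi,\pi^*)\ge e^{-m/k}$ is automatically an $e^{-m/k}$-approximation; thus it suffices to deterministically and efficiently produce such a $\pi^*$.

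For the derandomization I would run the method of conditional expectations on the additive objective $\log\trust(\varphi,\pi)=\sum_i\log\max_{\ell\in C_i}\pi(\ell)$, fixing the variables $x_1,\dots,x_n$ one at a time. The key computation is the conditional expectation of a single clause's contribution given a partial assignment: if the already-fixed literals of $C_i$ attain maximum value $s$ and $t$ of its literals remain unassigned (hence uniform and, because the $k$ literals of a $k$-SAT clause involve distinct variables, independent), then a short integral calculation of exactly the type used in Lemma~\ref{lem:approx2} gives
\[
\E\Big[\log\max_{\ell\in C_i}\pi(\ell)\ \Big|\ \text{partial assignment}\Big]\;=\;\frac{s^{\,t}-1}{t},
\]
interpreted as $\log s$ when $t=0$ (and as $-1/t$ when $s=0$). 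Summing this closed form over all clauses yields an exactly and efficiently computable conditional expectation for every partial assignment.

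Next I would argue that each greedy step is tractable over the continuous domain $[0,1]$. When choosing the value $v$ of the current variable $x_j$, only the clauses containing $x_j$ or $\neg x_j$ change, and each such clause contributes a term of the form $(\max(c,v)^{t}-1)/t$ or $(\max(c,1-v)^{t}-1)/t$, where $c$ is the maximum among its previously fixed literals. The total conditional expectation is therefore a continuous, piecewise-polynomial function of $v$ of degree less than $k$ with at most $m$ breakpoints, so its maximum over $[0,1]$ can be located in polynomial time. Since the conditional expectation before fixing $x_j$ equals the average of these values over a uniform $v$ (the tower property), choosing a maximizing $v$ never decreases the conditional expectation; hence the invariant ``conditional expectation $\ge -m/k$'' is maintained throughout. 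After all variables are fixed the conditional expectation is the actual value $\log\trust(\varphi,\pi^*)$, giving $\trust(\varphi,\pi^*)\ge e^{-m/k}$ and the claimed guarantee.

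The main obstacle I anticipate is precisely the continuous nature of the optimization: unlike the textbook Boolean method of conditional expectations, the optimal $v$ at a step is generally irrational, so one must argue that the piecewise-polynomial conditional-expectation function can be maximized to enough precision that the accumulated error does not spoil the $e^{-m/k}$ bound, and that the resulting $\pi^*$ admits a polynomial-bit representation. The latter follows from the bounded-denominator structure established in Claim~\ref{claim:abyb}, which lets sufficient-precision rounding recover an exact optimizer of each step. The remaining ingredients (the per-clause integral and the piecewise bookkeeping) are routine.
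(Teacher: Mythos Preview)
Your proposal is essentially the paper's own proof: both derandomize Lemma~\ref{lem:approx2} via the method of conditional expectations on the additive objective $\log\trust(\varphi,\pi)$, derive the same per-clause closed form (your $(s^{t}-1)/t$ is exactly the paper's $-\tfrac{1}{k'}(1-\max(\alpha,p)^{k'})$), and then maximize the resulting continuous piecewise function of the current variable over $[0,1]$ at each step. One small correction: your appeal to Claim~\ref{claim:abyb} for the precision issue is misplaced---that claim constrains $\optrustval(\varphi)$ itself, not the per-step maximizers of the conditional expectation, which are generally algebraic numbers with no bounded-denominator guarantee; the right fix (which the paper adopts via a footnote) is simply that polynomial-precision maximization at each step keeps the accumulated error negligible relative to $e^{-m/k}$.
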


\begin{proof}

Arbitrarily ordering the variables $x_1, x_2, \dots, x_n$, the idea is to sequentially set $\pi^*(x_1), \pi^*(x_2), \dots, \pi^*(x_n)$ ensuring that for every $i$:
\[
\E_{\pi \leftarrow_U [0,1]^n}\left[ \log \trust(\varphi, \pi) \mid \pi(x_j) = \pi^*(x_j)~\forall j\leq i\right] \geq -\sum_i \frac{1}{k_i}. \tag{*}
\]
Assuming $\pi^*(x_1), \dots, \pi^*(x_{i-1})$ have already been fixed, we show how to choose $\pi^*(x_i)$ satisfying the above.  We use $\pi_{<i}$ to denote $\pi(x_1) \cdots \pi(x_{i-1})$. For a clause $C$, let $\alpha = \max_{\ell \in C \cap \{x_j, \bar{x}_j: j < i\}} \pi^*(\ell)$, and suppose $x_i \in C$. Then:
\begin{align*}
&= -\int_{-\infty}^0 \Pr_{\pi} \left[\log \max_{\ell \in C} \pi(\ell) \leq t \mid \pi_{<i} = \pi^*_{<i}, \pi^*(x_i) = p\right] dt\\
&= -\int_{\log \max(\alpha, p)}^0 \Pr_{\pi} \left[\log \max_{\ell \in C \cap \{x_j, \bar{x}_j : j>i\}} \pi(\ell) \leq t \right] dt\\
&= -\frac{1}{k'} \left(1 - \max(\alpha, p)^{k'}\right)
\end{align*}
where $k'$ is the number of literals in the clause  $C$ involving variables $x_{i+1}, \dots, x_n$. One can similarly evaluate the conditional expectation in the cases $\bar{x}_i \in C$ and $C \cap \{x_i, \bar{x}_i\} = \emptyset$. 

Summing up over all the clauses $C$, we get that \[\E_{\pi}\left[ \log \trust(\varphi, \pi) \mid \pi_{<i} = \pi^*_{<i}, \pi^*(x_i) = p\right]\] is a continuous function of $p$ that is a piecewise polynomial in at most $m$ intervals. In polynomial time\footnote{For simplicity, we ignore issues of precision here, but the error can be made inversely polynomial in $n$.}, we can find a value of $p$ that maximizes this function. By induction on $i$, the maximum value of this function is at least $-\sum_i \frac{1}{k_i}$, and hence (*) is satisfied. This completes the description of the algorithm. 

\end{proof}


Next, we show that the approximation factor $e^{-m/k}$ can not be significantly improved. 

We use the following result on hardness of approximating \maxsat\ established by Hastad~\cite{Hastad01}. 
\begin{theorem}[\cite{Hastad01}]\label{thm:hastad}
For any $\varepsilon > 0$ and any $k\geq 3$ it is NP-hard to distinguish satisfiable $k$-SAT formulas from $k$-SAT formulae $< m(1-2^{-k} +\varepsilon)$ satisfiable clauses.
\end{theorem}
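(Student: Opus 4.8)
The plan is to derive Theorem~\ref{thm:hastad} from the PCP theorem via the standard outer/inner composition used by Håstad, rather than to treat it as a black box. Concretely, I would start from a \emph{Label Cover} instance (equivalently, a two-prover one-round projection game) obtained by applying the PCP theorem together with Raz's parallel repetition theorem: this yields, for every constant $\gamma > 0$, a gap problem that is NP-hard to decide between instances admitting a labeling satisfying \emph{all} projection constraints and instances in which no labeling satisfies more than a $\gamma$ fraction, with both sides of each constraint drawn from a constant-size alphabet. The overall reduction composes this hard Label Cover gap (the ``outer'' verifier) with a Fourier-analytic long-code test (the ``inner'' verifier). The final $k$-SAT formula $\varphi$ will contain one clause per random string of the composed verifier, arranged so that the maximum fraction of simultaneously satisfiable clauses of $\varphi$ equals the maximum acceptance probability of the verifier over all proofs.

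Next I would specify the inner verifier. Each prover answer is encoded by its \emph{long code}, the table of values of every Boolean function evaluated at the answer; these tables are \emph{folded} (identifying a function with its negation) to enforce balance and to hard-wire the side of the projection constraint, which is what makes perfect completeness automatic. To reach soundness $1 - 2^{-k}$ the acceptance predicate must be the $k$-ary disjunction: the verifier picks a Label Cover constraint at random, reads $k$ positions of the relevant folded long-code tables according to a carefully chosen distribution (for $k=3$ this is Håstad's distribution with a small perturbation term that preserves perfect completeness), and accepts iff the OR of the $k$ read bits, with the appropriate signs, is true. Completeness is exactly $1$, since a correct long-code encoding of a satisfying labeling passes every check; hence each random string contributes a genuine $k$-clause, and a satisfiable Label Cover instance yields a satisfiable $\varphi$.

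The technical heart, and the step I expect to be the main obstacle, is the soundness analysis. I would expand the verifier's acceptance probability in the Fourier basis of the folded tables, arithmetizing the $k$-OR predicate through its multilinear expansion. The test distribution is designed so that every \emph{proper} subset of the $k$ queried coordinates looks (nearly) uniform; this annihilates all Fourier contributions except the constant term, which accounts exactly for the ``random-assignment'' value $1 - 2^{-k}$, and the high-order ``diagonal'' terms that couple the two long-code tables across a projection constraint. Thus any acceptance probability exceeding $1 - 2^{-k}$ by a non-negligible amount forces those coupling terms to carry nonnegligible Fourier mass, and the standard decoding argument then extracts from each table a short list of candidate labels whose random selection satisfies a constant fraction of the Label Cover constraints in expectation. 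Controlling the influences/degrees of the surviving characters and turning them into a consistent labeling is where essentially all the difficulty lies.

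Finally I would assemble the gap and handle general $k$. If the composed verifier accepted with probability at least $1 - 2^{-k} + \varepsilon$, the decoding above would satisfy more than a $\gamma$ fraction of the constraints, contradicting the Label Cover soundness once $\gamma$ is chosen small relative to $\varepsilon$; so in the ``far'' case no assignment satisfies more than a $1 - 2^{-k} + \varepsilon$ fraction of the $m$ clauses of $\varphi$, i.e.\ fewer than $m(1 - 2^{-k} + \varepsilon)$ of them, while the ``satisfiable'' side follows from perfect completeness. The reduction runs in polynomial time because the alphabet is of constant size and each long code has constant length per answer, so $m$ is polynomial and the proof positions translate into clauses of arity exactly $k$, matching the definition of a $k$-SAT formula. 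Since the Label Cover gap is NP-hard, distinguishing the two cases is NP-hard, which is exactly the stated claim; for $k=3$ this reproduces the $7/8 + \varepsilon$ threshold, and the case $k>3$ is obtained verbatim by replacing the ternary OR with the $k$-ary OR and using the correspondingly spread-out $k$-query test distribution.
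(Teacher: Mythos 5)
The first thing to say is that the paper contains no proof of this statement at all: it is H{\aa}stad's inapproximability theorem, imported verbatim with the citation to \cite{Hastad01} and used as a black box in the proof of Theorem~\ref{thm:inapprox}. So the ``paper's own proof'' is a citation, and the expected treatment is likewise a citation, not a reproof. Your proposal instead undertakes to reprove H{\aa}stad's theorem, and to your credit the architecture you describe is the correct one --- indeed it is H{\aa}stad's own: an NP-hard Label Cover gap with perfect completeness from the PCP theorem plus Raz's parallel repetition, composed with a folded long-code inner verifier whose acceptance predicate is the $k$-ary OR, with one clause per check of the composed verifier (modulo the standard step, which you omit, of realizing the verifier's non-uniform distribution over checks by duplicating clauses so that acceptance probability equals the unweighted Max-$k$-SAT fraction).

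As a proof, however, the proposal has a genuine gap, and you name it yourself: the entire Fourier-analytic soundness analysis is described but not carried out. The assertions that the test distribution makes every proper subset of the $k$ queried coordinates look nearly uniform, that the perturbed distribution simultaneously preserves perfect completeness and drives the soundness to exactly $1-2^{-k}+\varepsilon$, and that the surviving cross terms can be list-decoded into a labeling satisfying a constant fraction of Label Cover constraints, constitute essentially all of the content of H{\aa}stad's paper; gesturing at ``the standard decoding argument'' does not discharge them, and none of this is routine. (A smaller inaccuracy: folding is what forces the decoded Fourier characters to sit on nonempty, constraint-respecting sets --- a soundness device --- while perfect completeness comes from the design of the query distribution itself, not from folding.) Note also that for $k>3$ you do not need a new inner verifier: a one-line padding reduction from the $k=3$ case suffices --- replace each $3$-clause $C$ by the $2^{k-3}$ clauses $C \vee \ell_1 \vee \cdots \vee \ell_{k-3}$ over all sign patterns of $k-3$ fresh variables, so a satisfied $C$ satisfies all copies while an unsatisfied $C$ leaves exactly one copy unsatisfied, turning the $(1,\,7/8+\varepsilon)$ gap into a $(1,\,1-2^{-k}+\varepsilon \cdot 2^{3-k})$ gap with every clause containing exactly $k$ distinct variables. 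That observation is both simpler than redoing the $k$-query test and closer in spirit to how the theorem functions in this paper: as a known result to be cited, not reproved.
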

We are now ready to show the following.

\begin{theorem}\label{thm:inapprox}
There is no polynomial-time ${1\over 4^{m(2^{-k}-\varepsilon)}}$-approximation algorithm for \optrust\ for $k$-SAT formulae, unless $\P=\NP$.    
\end{theorem}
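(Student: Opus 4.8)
The plan is to derive a contradiction by showing that such an approximation algorithm would let us solve Hastad's gap problem of Theorem~\ref{thm:hastad} in polynomial time. Fix $k \geq 3$ and $\varepsilon > 0$, and suppose toward a contradiction that $A$ is a polynomial-time algorithm that, on every $k$-SAT instance $\varphi$ with $m$ clauses, outputs an interpretation $\pi$ whose value $v = \trust(\varphi,\pi)$ (computable in polynomial time) satisfies $\frac{1}{4^{m(2^{-k}-\varepsilon)}}\cdot \optrustval(\varphi) \leq v \leq \optrustval(\varphi)$, where the upper bound is immediate since $\optrustval(\varphi)$ is the maximum over all interpretations.

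The key idea is that the threshold $\tau = \frac{1}{4^{m(2^{-k}-\varepsilon)}}$ cleanly separates the two sides of Hastad's gap. First I would handle the satisfiable case: if $\varphi$ is satisfiable, then $\optrustval(\varphi) = 1$ by Observation~\ref{obs:sat}, so the approximation guarantee forces $v \geq \tau \cdot 1 = \tau$. Next the far case: if the maximum number $r$ of simultaneously satisfiable clauses satisfies $r < m(1 - 2^{-k} + \varepsilon)$, then $m - r > m(2^{-k} - \varepsilon)$, and Theorem~\ref{thm:OptTrustBound} gives $\optrustval(\varphi) \leq \frac{1}{4^{m-r}} < \frac{1}{4^{m(2^{-k}-\varepsilon)}} = \tau$; since $v \leq \optrustval(\varphi)$, we obtain $v < \tau$.

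Thus the simple polynomial-time test ``is $v \geq \tau$?'' accepts exactly the satisfiable instances and rejects all instances in the far regime, thereby distinguishing the two cases of Theorem~\ref{thm:hastad}. Since that distinguishing task is NP-hard, we conclude $\P = \NP$, contradicting the standard assumption; hence no such approximation algorithm exists unless $\P = \NP$, which is the claim.

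The only real care needed is the bookkeeping of the approximation convention and the directions of the inequalities: one must verify that the lower bound $v \geq \tau$ in the satisfiable case and the strict upper bound $v < \tau$ in the far case meet at the \emph{same} threshold $\tau$, driven by the \emph{same} $\varepsilon$ used in Hastad's theorem, so that no slack in the gap is lost. I expect this threshold-matching---rather than any deep argument---to be the main (and essentially the only) point to check, since the heavy lifting is already supplied by Theorem~\ref{thm:OptTrustBound} and Observation~\ref{obs:sat}.
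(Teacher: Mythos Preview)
Your proposal is correct and follows essentially the same approach as the paper: assume the approximation algorithm exists, use Observation~\ref{obs:sat} on the satisfiable side and Theorem~\ref{thm:OptTrustBound} on the far side to show that comparing the algorithm's output to the threshold $1/4^{m(2^{-k}-\varepsilon)}$ decides Hastad's gap problem, contradicting Theorem~\ref{thm:hastad}. If anything, your handling of the strict versus non-strict inequality in the far case ($r < m(1-2^{-k}+\varepsilon)$ yielding $\optrustval(\varphi) < \tau$) is slightly more careful than the paper's own write-up.
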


\begin{proof}
Assuming such an approximation algorithm $A$ exists, we contradict Hastad's Theorem (Theorem \ref{thm:hastad}). 
Consider the following algorithm $A'$ that on input a $k$-SAT formula $\varphi$, runs $A(\varphi)$. If $A$ outputs a value  that is $\geq {1\over 4^{m(2^{-k}-\varepsilon)}}$, then $A'$ outputs YES otherwise outputs NO.  
Suppose $\varphi$ is satisfiable, then $\optrust(\varphi) = 1$. Hence $A$ will output a value $\geq {1\over 4^{m(2^{-k}-\varepsilon)}}$. Thus $A'$ output YES. 
Suppose maximum number of satisfiable clauses for $\varphi$ is $\leq m(1-2^{-k} +\varepsilon)$. By Theorem~\ref{thm:OptTrustBound}, 

\[\optrust(\varphi) <  \frac{1}{4^{m-m(1-2^{-k}+\varepsilon)}} 
=  \frac{1}{4^{m(2^{-k}-\varepsilon)}} \]

Hence output of $A$ is $<{1\over 4^{m(2^{-k}-\varepsilon)}}$ and 
hence $A'$ will output NO. 

Thus $A'$ contradicts Theorem~\ref{thm:hastad}, unless $\P=\NP$. 
\end{proof}

Thus, for example for $3$-SAT formulas,  while we have a polynomial-time, $0.716^m$-approximation algorithm (by Theorem~\ref{thm:appalgo}), we cannot expect an efficient $0.845^m$-approximation algorithm by the above result unless $\P$ equals $\NP$.  It remains an interesting open problem to determine the optimal approximation ratio for this problem achievable by a polynomial time algorithm.

\section{Complexity of Access Maximization}\label{sec:access}

In this section, we study the optimization problems for the access control semiring  $\mathbb{A}_k = ([k], \max, \min, 0, k)$. We refer to the corresponding computational problems as \optaccessval\ and \optaccess. For this section we first assume the negation function is the additive inverse modulo $k$. That is $\negation(a) = b$ such that $a+b \equiv 0 ~({\rm mod}~k)$. 

\begin{theorem}\label{thm:access}
Let $\varphi(x_1, \cdots x_n)$ be a propositional formula in negation normal form and $\mathbb{A}_k = ([k], \max, \min, 0, k)$. The following statement holds. 
\begin{itemize}
\item If $\varphi$ is satisfiable, then $\optaccessval(\varphi) = k$.
\item If $\varphi$ is not satisfiable, then $\optaccessval(\varphi) =\lfloor\frac{k}{2}\rfloor$.
\end{itemize}
\end{theorem}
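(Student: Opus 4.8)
The plan is to prove the two cases separately, and in the unsatisfiable case to establish matching upper and lower bounds on $\optaccessval(\varphi)$. Throughout I would use that, under the stated negation, $\negation(a)=k-a$ for every $a\in[k]$ (in particular $\negation(0)=k$ and $\negation(k)=0$, which is forced if $\negation$ is to be a bijection), so negated literals are evaluated by $\sem(\neg x,\pi)=k-\pi(x)$.

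For the satisfiable case, I would take a Boolean satisfying assignment $\mathbf{a}$ and define the interpretation $\pi(x_i)=k$ if $a_i=\mathrm{True}$ and $\pi(x_i)=0$ otherwise. On the two-element set $\{0,k\}$, the operations $\max$, $\min$ and the negation $a\mapsto k-a$ coincide exactly with Boolean $\vee,\wedge,\neg$, with $k$ playing the role of True and $0$ of False. Hence a routine induction on the formula tree $T_\varphi$ shows $\sem(\varphi,\pi)=k$ whenever $\mathbf{a}$ satisfies $\varphi$. Since $k$ is the largest element of $[k]$ and every interpretation yields a value in $[k]$, this gives $\optaccessval(\varphi)=k$.

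For the unsatisfiable case I would separately prove $\optaccessval(\varphi)\ge\lfloor k/2\rfloor$ and $\optaccessval(\varphi)\le\lfloor k/2\rfloor$. The lower bound follows from the single interpretation $\pi(x_i)=\lfloor k/2\rfloor$ for all $i$: every positive literal then evaluates to $\lfloor k/2\rfloor$ and every negative literal to $k-\lfloor k/2\rfloor=\lceil k/2\rceil\ge\lfloor k/2\rfloor$, so every leaf of $T_\varphi$ has value at least $\lfloor k/2\rfloor$; since both $\max$ and $\min$ preserve the property of being $\ge\lfloor k/2\rfloor$, an induction up the tree yields $\sem(\varphi,\pi)\ge\lfloor k/2\rfloor$.

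The crux is the upper bound, which I expect to be the main obstacle. Suppose for contradiction that an optimal interpretation $\pi^*$ achieves value $v=\optaccessval(\varphi)>\lfloor k/2\rfloor$; note this forces $v>k/2$, hence $k-v<v$. I would round $\pi^*$ to a Boolean assignment by setting $a_i=\mathrm{True}$ iff $\pi^*(x_i)\ge v$. The key point — and the reason the threshold is exactly $v>k/2$ — is that a literal whose semiring value is $\ge v$ is necessarily True under $\mathbf{a}$: for a positive literal $x_i$ this is immediate, while for a negative literal $\neg x_i$, $\sem(\neg x_i,\pi^*)=k-\pi^*(x_i)\ge v$ gives $\pi^*(x_i)\le k-v<v$, so $a_i=\mathrm{False}$ and $\neg x_i$ is True. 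A bottom-up induction on $T_\varphi$ then shows that any node with $\sem$-value $\ge v$ evaluates to True under $\mathbf{a}$: an $\vee$-node with $\max\ge v$ has some child $\ge v$ (hence True by induction), and an $\wedge$-node with $\min\ge v$ has both children $\ge v$. Applying this to the root, whose value is exactly $v$, shows $\mathbf{a}$ satisfies $\varphi$, contradicting unsatisfiability; therefore $v\le\lfloor k/2\rfloor$, and combined with the lower bound $\optaccessval(\varphi)=\lfloor k/2\rfloor$. (I would assume, as is standard, that $\varphi$ is genuinely a formula over literals; any constant leaves behave exactly like Boolean $\mathrm{False}/\mathrm{True}$ and can be simplified away beforehand.)
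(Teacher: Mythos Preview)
Your proposal is correct and follows the same core strategy as the paper: map a satisfying assignment to the $\{0,k\}$-valued interpretation for the first item; for the second item, use the constant interpretation $\pi\equiv\lfloor k/2\rfloor$ for the lower bound and a Boolean rounding of a hypothetical high-value interpretation for the upper bound, deriving a satisfying assignment and hence a contradiction.

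The one noteworthy difference is in presentation and scope. The paper proves the statement first for CNF formulae (arguing clause by clause) and then remarks that the general negation-normal-form case ``follows by similar ideas using the notion of proof trees.'' You instead argue directly for arbitrary NNF via a single structural induction on the formula tree, which is cleaner: the monotonicity of $\max$ and $\min$ with respect to the threshold $\ge v$ (for the upper bound) and $\ge\lfloor k/2\rfloor$ (for the lower bound) is exactly what the inductive step needs, and no detour through CNF or proof trees is required. Your choice of the threshold $\pi^*(x_i)\ge v$ (rather than the paper's $>k/2$) also lets you treat even and odd $k$ uniformly, whereas the paper handles even $k$ and says the odd case is ``very similar.'' Your parenthetical about constant leaves is a fair caveat that the paper does not address either.
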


\begin{proof}
We will first prove it for the case when  $\varphi$ is in the CNF form, i.e  $\varphi = C_1 \wedge \cdots \wedge C_m$. Suppose that the formula is satisfiable and $a_1 \cdots a_n$ is a satisfying assignment to the variables $x_1, x_2, \cdots, x_n$.  Consider the interpretation $\pi$ defined as follows: If $a_i$ is true, then $\pi(x_i) = k$, else $\pi(x_i) = \negation(k)$. 
Consider a clause $C$, since the formula is satisfiable, there exists a literal $\ell_i$ (either $x_i$ or $\neg x_i$ for some $i$)  in $C$ such that $\ell_i$ is set to true. If $\ell_i = x_i$, then $\pi(x_i) = k$ and $\sem(x_i, \pi) = k$. If $\ell_i = \neg x_i$, then $\pi((x_i) = \negation(k) = 0$ and $\sem(\neg x_i, \pi) = \negation(0) = k$. Since $C$ is a disjunction $\sem(C, \pi) = k$. Thus for every clause $C_i$, $\sem(C_i, \pi) = k$.
Since $\varphi$ is a conjunction of $C_1, \cdots C_m$, it follows that $\sem(\varphi, \pi) = k$.

For the proof of the second item, first assume that $k$ is even, the proof when $k$ is odd is very similar.  Note that in this case, $\negation(k/2) = k/2$. Let $\varphi = C_1 \wedge \cdots \wedge C_m$ be an unsatisfiable formula. Consider an interpretation $\pi$ where $\pi(x_i) = k/2$ for every $1 \leq i \leq n$. Clearly, for this interpretation, $\sem(\varphi, \pi) = k/2$.   Suppose that $\pi'$ be an interpretation  $\sem(\varphi, \pi') > k/2$. Consider the following satisfying assignment: $a_i$ is true if $\varphi'(x_i) > k/2$, else $a_i$ is false. Observe that this is a consistent assignment. We will establish that this assignment satisfies $\varphi$. This establishes that $\optaccessval(\varphi) = k/2$.

Note that for every clause $C_j$, $1 \leq j \leq m$, $\sem(C_j, \pi') > k/2$. Fix a clause $C$, since $\sem(C,  \pi') > k/2$, there exists a literal $\ell_i$ in $C$ such that $\sem(\ell_i, \pi') > k/2$. If $\ell_i = x_i$, then $\sem(x_i, \pi') > k/2$ which implies that $\pi'(x_i) > k/2$. Thus $a_i$ is true and the clause $C$ is satisfied by the assignment. If $\ell_i = \neg x_i$, then $\sem(\neg x_i, \pi') > k/2$. Thus $\negation(\pi'(x_i)) > k/2$. By the definition of $\negation$, we have $\pi'(x_i) < k/2$. Thus $a_i $ is set to false. Thus the clause $C$ is satisfiable. This proves that the assignment $a_1, \cdots, a_n$ satisfies the formula $\varphi(x_1, \cdots, x_n)$.

The case where the general formula is in the negation normal form follows by similar ideas using the notion of proof trees as in the case of Viterbi semiring. 
\end{proof}

For a general negation function, we can establish an analogous theorem. For this, we define the notion of the {\em index of negation}. Given a negation function $\negation$, its {index} denoted by ${\it Index}(\negation)$ is the largest $\ell$ for which there exists $a \in [k]$, such that both $a$ and $\negation(a)$ are at least $\ell$. 

\begin{theorem}
Let $\varphi(x_1, \cdots x_n)$ be a propositional formula in negation normal form and $\mathbb{A}_k = ([k], \max, \min, 0, k)$. The following statement holds. 
\begin{itemize}
\item If $\varphi$ is satisfiable, then $\optaccessval(\varphi) = k$.
\item If $\varphi$ is not satisfiable, then $\optaccessval(\varphi) = Index(\negation)$.
\end{itemize}
\end{theorem}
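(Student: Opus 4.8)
The plan is to generalize the proof of the preceding theorem, replacing the specific threshold $k/2$ with the index of the negation function. The two directions (satisfiable and unsatisfiable) should mirror the earlier argument, so I would first establish the satisfiable case, which is essentially unchanged: if $\varphi$ is satisfiable with satisfying assignment $a_1,\ldots,a_n$, set $\pi(x_i) = k$ when $a_i$ is true and $\pi(x_i) = \negation(k) = 0$ when $a_i$ is false. As before, every satisfied literal evaluates to $k$ under $\sem$, so every clause (or more generally, every proof tree via the $\min$/$\max$ extension) evaluates to $k$, giving $\optaccessval(\varphi) = k$. This part does not interact with $\mathit{Index}(\negation)$ at all.

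The substance is in the unsatisfiable case, and the central idea is that $\ell^* = \mathit{Index}(\negation)$ is precisely the best ``hedging'' value one can achieve when forced to keep a literal and its negation simultaneously large. First I would exhibit an interpretation achieving value $\ell^*$: by definition of the index there is some $a \in [k]$ with $a \geq \ell^*$ and $\negation(a) \geq \ell^*$; setting $\pi(x_i) = a$ for all $i$ makes every literal evaluate to at least $\min(a, \negation(a)) \geq \ell^*$, so each clause, being a $\max$ over its literals, is at least $\ell^*$, and the whole formula (a $\min$ over clauses) is at least $\ell^*$. Thus $\optaccessval(\varphi) \geq \ell^*$.

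For the matching upper bound, I would argue by contradiction exactly as in the previous theorem. Suppose some interpretation $\pi'$ gives $\sem(\varphi,\pi') > \ell^*$. Define the truth assignment $a_i = \mathrm{True}$ iff $\pi'(x_i) > \ell^*$ (or some threshold chosen to separate the literal from its negation). The key point is that for any variable $x_i$, we cannot have both $\pi'(x_i) > \ell^*$ and $\negation(\pi'(x_i)) > \ell^*$, since that would witness an index value strictly larger than $\ell^*$, contradicting the definition of $\mathit{Index}(\negation)$. Hence the assignment is well-defined in the sense that exactly one of $x_i, \neg x_i$ can be the ``high'' literal. Since each clause evaluates to $> \ell^*$ under $\sem$, it contains a literal whose value exceeds $\ell^*$, and that literal is set to True by the assignment; this shows $\varphi$ is satisfiable, contradicting unsatisfiability. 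Therefore $\optaccessval(\varphi) = \ell^*$.

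The main obstacle is handling the consistency of the derived truth assignment with full rigor, since with a general negation $\negation$ the clean symmetry $\negation(x) < k/2 \Leftrightarrow x > k/2$ no longer holds. I would need to verify carefully that the defining property of $\mathit{Index}(\negation)$ — that no $a$ has both $a$ and $\negation(a)$ strictly above $\ell^*$ — indeed forbids assigning both a literal and its complement the value True, which is exactly what keeps the assignment consistent; the bijectivity and involutivity of $\negation$ together with $\negation(0)=1$ (here $0 \leftrightarrow k$) are the properties I would lean on. The extension from CNF to general negation normal form proceeds, as the authors note, via proof trees and the $\min$/$\max$ semantics, and requires no new ideas beyond the CNF case.
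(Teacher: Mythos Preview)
Your proposal is correct and follows exactly the approach the paper intends: it is the natural generalization of the proof of Theorem~\ref{thm:access}, with the threshold $k/2$ replaced by $\ell^* = \mathit{Index}(\negation)$, and the paper itself does not give a separate detailed proof beyond pointing to this generalization. One minor remark: your ``main obstacle'' is slightly overstated---the truth assignment $a_i = \mathrm{True}$ iff $\pi'(x_i) > \ell^*$ is automatically consistent (it is defined per variable), and what the defining property of the index actually buys you is the step that when $\negation(\pi'(x_i)) > \ell^*$ one must have $\pi'(x_i) \le \ell^*$, so that $a_i = \mathrm{False}$ and the literal $\neg x_i$ is satisfied; this is exactly what you wrote in the body of the argument, so the hedging at the end is unnecessary.
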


The following is a corollary to the above result and its proof which states that the complexity of optimization problems over access control semiring is equivalent to their complexity over the Boolean semiring. 

\begin{theorem}\label{thm:accessequivalence}
The problem $\optaccessval$ and ${\mathsf{SAT}}$ are equivalent under metric reductions. 
Similarly, the problem \optaccess\ and the problem of computing a satisfying assignment of a given Boolean formula are equivalent under metric reductions.  
\end{theorem}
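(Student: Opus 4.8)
The plan is to prove Theorem~\ref{thm:accessequivalence} by establishing reductions in both directions, leveraging the structural dichotomy already proved in Theorem~\ref{thm:access}. The key observation driving everything is that over the access control semiring, $\optaccessval(\varphi)$ takes exactly one of two values: it equals $k$ when $\varphi$ is satisfiable and $\lfloor k/2 \rfloor$ when $\varphi$ is unsatisfiable. This means $\optaccessval$ encodes \emph{precisely} the satisfiability bit and nothing more, so the two problems carry the same information.

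First I would handle the direction $\optaccessval \leq_m \mathsf{SAT}$ (that $\optaccessval$ metric reduces to $\mathsf{SAT}$). Given an input formula $\varphi$, the reduction function $h_1$ simply outputs $\varphi$ itself (together with the parameter $k$, which is part of the input), and queries the $\mathsf{SAT}$ oracle on $\varphi$. The recovery function $h_2$ reads the oracle's yes/no answer and outputs $k$ if the answer is YES and $\lfloor k/2 \rfloor$ if the answer is NO. Correctness is immediate from Theorem~\ref{thm:access}. For the reverse direction $\mathsf{SAT} \leq_m \optaccessval$, the reduction function again passes $\varphi$ through unchanged, and $h_2$ inspects the returned value $v = \optaccessval(\varphi)$: it outputs YES (i.e.\ $1$) precisely when $v = k$, and NO otherwise. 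Again Theorem~\ref{thm:access} guarantees that $v = k$ iff $\varphi$ is satisfiable, so this is correct. Both $h_1$ and $h_2$ are trivially polynomial-time computable.

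For the second statement relating \optaccess\ to the search problem of computing a satisfying assignment, I would again argue both directions. To reduce satisfying-assignment computation to \optaccess: given $\varphi$, pass it to the \optaccess\ oracle to obtain an optimal interpretation $\pi^*$. If $\varphi$ is satisfiable, then by the proof of Theorem~\ref{thm:access}, $\sem(\varphi,\pi^*) = k$, and I can read off a satisfying assignment by setting $x_i$ true whenever $\pi^*(x_i) > \lfloor k/2\rfloor$ (equivalently $\pi^*(x_i)$ lies on the ``true'' side of the threshold); the consistency-and-satisfaction argument is exactly the one used in the unsatisfiable case of Theorem~\ref{thm:access}, but now applied to a value-$k$ interpretation. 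If $\sem(\varphi,\pi^*) < k$, the formula is unsatisfiable and $h_2$ reports failure. Conversely, to reduce \optaccess\ to computing a satisfying assignment: first determine satisfiability (via the same threshold trick, or by a single call treating the assignment-search oracle's output), and if a satisfying assignment $a$ is returned, construct the optimal interpretation $\pi^*$ by the recipe in Theorem~\ref{thm:access}, namely $\pi^*(x_i) = k$ if $a_i$ is true and $\pi^*(x_i) = \negation(k) = 0$ otherwise; if $\varphi$ is unsatisfiable, output the uniform interpretation $\pi^*(x_i) = \lfloor k/2 \rfloor$.

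The main obstacle, modest as it is, lies in the search-problem direction rather than the value direction: I must verify that the threshold-based decoding of $\pi^*$ actually yields a \emph{satisfying} assignment, which requires re-running the consistency argument from Theorem~\ref{thm:access} in the case where the optimal value is $k$ rather than where it exceeds $\lfloor k/2\rfloor$. The cleanest route is to observe that any interpretation achieving value $> \lfloor k/2 \rfloor$ already forces satisfiability via that lemma's construction, and value $k$ is a special case, so no new work is truly needed. A secondary care point is the general-negation version: since Theorem~\ref{thm:access} is stated for $\negation(a) = -a \bmod k$, I would note that the corollary as stated concerns this standard negation, and the decoding threshold should be taken consistently with whatever $\negation$ and ${\it Index}(\negation)$ are in force. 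Given that the whole argument is essentially a restatement of the dichotomy, I expect the proof to be short and to consist mainly of verifying that the metric-reduction bookkeeping (the functions $h_1,h_2$) lines up with the two-valued behavior established earlier.
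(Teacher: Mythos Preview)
Your proposal is correct and follows essentially the same approach as the paper: the paper presents Theorem~\ref{thm:accessequivalence} as an immediate corollary of Theorem~\ref{thm:access} and its proof, without spelling out the metric-reduction bookkeeping you supply. Your detailed verification of the reductions in both directions, including the threshold decoding of an optimal interpretation into a satisfying assignment, simply makes explicit what the paper leaves implicit.
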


\section{Conclusion}\label{sec:conclusion}

In this work, we provided a comprehensive study of the computational complexity of $\optsem$ and the related problem \optsemval\  over various semirings such as Viterbi semiring, tropical semiring, access control semiring and fuzzy semiring,  from both an algorithmic and a complexity-theoretic viewpoint. An exciting recent development in the field of CSP/SAT solving has been the development of solvers for {\lexsat}, which seeks to find the smallest lexicographic satisfying assignment of a formula~\cite{MAL11}. In this regard, Theorem~\ref{theorem:pnp} opens up exciting directions of future work to develop efficient techniques for {\optrust}.

\section{Acknowledgements}
We thank Val Tannen for introducing us to the world of semiring semantics and for helpful conversations during the nascent stages of the project.  We thank the anonymous reviewers of AAAI-23 for valuable comments. This research is supported by the National Research Foundation under the NRF Fellowship Programme [NRF-NRFFAI1-2019-0004] and Campus for Research Excellence and Technological Enterprise (CREATE) program. Bhattacharyya was supported in part by the NRF Fellowship Programme [NRF-NRFFAI1-2019-0002] and an Amazon Research Award.  Vinod was supported in part by NSF CCF-2130608 and NSF HDR:TRIPODS-1934884 awards. Pavan was supported in part by NSF CCF-2130536,  and NSF HDR:TRIPODS-1934884 awards.

\bibliography{main}

\newcommand{\etalchar}[1]{$^{#1}$}
\begin{thebibliography}{MSAGL11}

\bibitem[ADT11]{ADT11}
Yael Amsterdamer, Daniel Deutch, and Val Tannen.
\newblock Provenance for aggregate queries.
\newblock In {\em Proc. of PODS}, pages 153--164, 2011.

\bibitem[AS08]{AS08}
Noga Alon and Joel~H. Spencer.
\newblock {\em The Probabilistic Method, Third Edition}.
\newblock Wiley-Interscience series in discrete mathematics and optimization.
  Wiley, 2008.

\bibitem[BG06]{BG06}
Stefano Bistarelli and Fabio Gadducci.
\newblock Enhancing constraints manipulation in semiring-based formalisms.
\newblock In {\em ECAI}, volume 141, pages 63--67, 2006.

\bibitem[Bis04]{B04}
Stefano Bistarelli.
\newblock {\em Semirings for soft constraint solving and programming}, volume
  2962.
\newblock Springer Science \& Business Media, 2004.

\bibitem[BMR95]{BMR95}
Stefano Bistarelli, Ugo Montanari, and Francesca Rossi.
\newblock Constraint solving over semirings.
\newblock In {\em IJCAI (1)}, pages 624--630. Citeseer, 1995.

\bibitem[BMR97]{BistarelliMR97}
Stefano Bistarelli, Ugo Montanari, and Francesca Rossi.
\newblock Semiring-based constraint satisfaction and optimization.
\newblock {\em J. {ACM}}, 44(2):201--236, 1997.

\bibitem[BMR{\etalchar{+}}99]{BMRS+99}
Stefano Bistarelli, Ugo Montanari, Francesca Rossi, Thomas Schiex, G{\'e}rard
  Verfaillie, and H{\'e}lene Fargier.
\newblock Semiring-based csps and valued csps: Frameworks, properties, and
  comparison.
\newblock {\em Constraints}, 4(3):199--240, 1999.

\bibitem[Cui02]{Cui02}
Yingwei Cui.
\newblock {\em Lineage tracing in data warehouses}.
\newblock PhD thesis, Stanford University, 2002.

\bibitem[CWW00]{CWW00}
Yingwei Cui, Jennifer Widom, and Janet~L Wiener.
\newblock Tracing the lineage of view data in a warehousing environment.
\newblock {\em ACM Transactions on Database Systems (TODS)}, 25(2):179--227,
  2000.

\bibitem[DMRT14]{DMRT14}
Daniel Deutch, Tova Milo, Sudeepa Roy, and Val Tannen.
\newblock Circuits for datalog provenance.
\newblock In {\em Proc. of ICDT}, pages 201--212. OpenProceedings.org, 2014.

\bibitem[EK21]{EiterK21}
Thomas Eiter and Rafael Kiesel.
\newblock On the complexity of sum-of-products problems over semirings.
\newblock In {\em Proc. of AAAI}, pages 6304--6311. {AAAI} Press, 2021.

\bibitem[FGT08]{FGT08}
J~Nathan Foster, Todd~J Green, and Val Tannen.
\newblock Annotated xml: queries and provenance.
\newblock In {\em Proceedings of the twenty-seventh ACM SIGMOD-SIGACT-SIGART
  symposium on Principles of database systems}, pages 271--280, 2008.

\bibitem[FR97]{FR97}
Norbert Fuhr and Thomas R{\"o}lleke.
\newblock A probabilistic relational algebra for the integration of information
  retrieval and database systems.
\newblock {\em ACM Transactions on Information Systems (TOIS)}, 15(1):32--66,
  1997.

\bibitem[GKT07]{GKT07}
Todd~J. Green, Gregory Karvounarakis, and Val Tannen.
\newblock Provenance semirings.
\newblock In {\em Proc. of PODS}, pages 31--40, 2007.

\bibitem[GM21]{GradelM21}
Erich Gr{\"{a}}del and Lovro Mrkonjic.
\newblock Elementary equivalence versus isomorphism in semiring semantics.
\newblock In {\em Proc. of ICALP}, volume 198 of {\em LIPIcs}, pages
  133:1--133:20, 2021.

\bibitem[Gre11]{G11}
Todd~J Green.
\newblock Containment of conjunctive queries on annotated relations.
\newblock {\em Theory of Computing Systems}, 49(2):429--459, 2011.

\bibitem[GT20]{GT17}
Erich Gr{\"{a}}del and Val Tannen.
\newblock Provenance analysis for logic and games.
\newblock {\em Moscow Journal of Combinatorics and Number Theory}, 9(3):203 --
  228, 2020.

\bibitem[GW94]{GW94}
Michel~X. Goemans and David~P. Williamson.
\newblock New 3/4-approximation algorithms for the maximum satisfiability
  problem.
\newblock {\em {SIAM} J. Discret. Math.}, 7(4):656--666, 1994.

\bibitem[H{\aa}s01]{Hastad01}
Johan H{\aa}stad.
\newblock Some optimal inapproximability results.
\newblock {\em J. {ACM}}, 48(4):798--859, 2001.

\bibitem[ILJ89]{IL89}
Tomasz Imieli{\'n}ski and Witold Lipski~Jr.
\newblock Incomplete information in relational databases.
\newblock In {\em Readings in Artificial Intelligence and Databases}, pages
  342--360. Elsevier, 1989.

\bibitem[KM03]{KleinM03}
Dan Klein and Christopher~D. Manning.
\newblock A* parsing: Fast exact viterbi parse selection.
\newblock In Marti~A. Hearst and Mari Ostendorf, editors, {\em Proc. of
  HLT-NAACL}. The Association for Computational Linguistics, 2003.

\bibitem[Kre88]{Krentel88}
Mark~W. Krentel.
\newblock The complexity of optimization problems.
\newblock {\em J. Comput. Syst. Sci.}, 36(3):490--509, 1988.

\bibitem[Moh02]{Mohri02}
Mehryar Mohri.
\newblock Semiring frameworks and algorithms for shortest-distance problems.
\newblock {\em J. Autom. Lang. Comb.}, 7(3):321–350, jan 2002.

\bibitem[MRS06]{MRS06}
Pedro Meseguer, Francesca Rossi, and Thomas Schiex.
\newblock Soft constraints.
\newblock In {\em Foundations of Artificial Intelligence}, volume~2, pages
  281--328. Elsevier, 2006.

\bibitem[MSAGL11]{MAL11}
Joao Marques-Silva, Josep Argelich, Ana Gra{\c{c}}a, and In{\^e}s Lynce.
\newblock Boolean lexicographic optimization: algorithms \& applications.
\newblock {\em Annals of Mathematics and Artificial Intelligence},
  62(3):317--343, 2011.

\bibitem[Tan13]{Tan13}
Val Tannen.
\newblock Provenance propagation in complex queries.
\newblock In {\em In Search of Elegance in the Theory and Practice of
  Computation}, pages 483--493. Springer, 2013.

\bibitem[Tan17]{Tan17}
Val Tannen.
\newblock Provenance analysis for fol model checking.
\newblock {\em ACM SIGLOG News}, 4(1):24--36, 2017.

\bibitem[Vit67]{V67}
Andrew Viterbi.
\newblock Error bounds for convolutional codes and an asymptotically optimum
  decoding algorithm.
\newblock {\em IEEE transactions on Information Theory}, 13(2):260--269, 1967.

\bibitem[Zim97]{Z97}
Esteban Zim{\'a}nyi.
\newblock Query evaluation in probabilistic relational databases.
\newblock {\em Theoretical Computer Science}, 171(1-2):179--219, 1997.

\end{thebibliography}
\bibliographystyle{alpha}

\end{document}